\definecolor{darkblue}{rgb}{0,0,0.5}
\newcounter{lemmacounter}
\newenvironment{lemma}[1][]{\refstepcounter{lemmacounter}
   {{\em Lemma~\thelemmacounter #1}.---} \rmfamily}
\newcounter{theoremcounter}
\newcounter{propositioncounter}
\newcounter{conjecturecounter}
\newcounter{definitioncounter}
\newenvironment{definition}[1][]{\refstepcounter{definitioncounter}
   {{\em Definition~\thedefinitioncounter #1}.---} \rmfamily}
\def\be{\begin{equation}}
\def\ee{\end{equation}}
\def\ba{\begin{eqnarray}}
\def\ea{\end{eqnarray}}
\newcommand{\calN}{{\cal N}}
\newcommand{\calG}{{\cal G}}
\newcommand{\tr}{{\rm Tr}}
\newcommand{\ketbra}[2]{|{#1}\rangle\!\langle{#2}|}
\newcommand{\state}[1]{\ketbra{#1}{#1}}
\newcommand{\eq}[1]{(\hyperref[eq:#1]{\ref*{eq:#1}})}
\renewcommand{\sec}[1]{\hyperref[sec:#1]{Section~\ref*{sec:#1}}}
\newcommand{\thrm}[1]{\hyperref[thm:#1]{Theorem~\ref*{thm:#1}}}
\newcommand{\lemm}[1]{\hyperref[lemm:#1]{Lemma~\ref*{lemm:#1}}}
\newcommand{\pro}[1]{\hyperref[pro:#1]{Proposition~\ref*{pro:#1}}}
\newcommand{\corr}[1]{\hyperref[corr:#1]{Corollary~\ref*{corr:#1}}}
\newcommand{\deff}[1]{\hyperref[deff:#1]{Definition~\ref*{deff:#1}}}
\newcommand{\fig}[1]{\hyperref[fig:#1]{\ref*{fig:#1}}}
\newcommand{\tbl}[1]{\hyperref[fig:#1]{\ref*{tbl:#1}}}
\DeclareMathOperator{\Tr}{Tr}
\begin{document}

\preprint{MIT-CTP/5005}

\title{Convex resource theory of non-Gaussianity}

\author{Ryuji Takagi}
\email{rtakagi@mit.edu}
\affiliation{Center for Theoretical Physics, Massachusetts Institute of Technology, Cambridge, Massachusetts 02139, USA}
\affiliation{Department of Physics, Massachusetts Institute of Technology, Cambridge, Massachusetts 02139, USA}
\author{Quntao Zhuang}
\email{quntao@mit.edu}
\affiliation{Department of Physics, Massachusetts Institute of Technology, Cambridge, Massachusetts 02139, USA}
\affiliation{Research Laboratory of Electronics, Massachusetts Institute of Technology, Cambridge, Massachusetts 02139, USA}
\date{\today}

\begin{abstract} 
Continuous-variable systems realized in quantum optics play a major role in quantum information processing, and it is also one of the promising candidates for a scalable quantum computer. 
We introduce a resource theory for continuous-variable systems relevant to universal quantum computation.
In our theory, easily implementable operations---Gaussian operations combined with feed-forward---are chosen to be the free operations, making the convex hull of the Gaussian states the natural free states.
Since our free operations and free states cannot perform universal quantum computation, genuine non-Gaussian states---states not in the convex hull of Gaussian states---are the necessary resource states for universal quantum computation together with free operations.
We introduce a monotone to quantify the genuine non-Gaussianity of resource states, in analogy to the stabilizer theory. 
% Moreover, we show that such a monotone has a nice interpretation---it bounds the classical simulation cost of the genuine non-Gaussian states. 
A direct application of our resource theory is to bound the conversion rate between genuine non-Gaussian states. Finally, we give a protocol that probabilistically distills genuine non-Gaussianity---increases the genuine non-Gaussianity of resource states---only using free operations and postselection on Gaussian measurements, where our theory gives an upper bound for the distillation rate. In particular, the same protocol allows the distillation of cubic phase states, which enable universal quantum computation when combined with free operations.

%It turns out that the same protocol also allows for ``magic-state distillation'' where the cubic phase state is taken as a magic state in the continuous-variable case.   

\end{abstract} 

\keywords{Quantum Information, Quantum Physics, Optics.}

\maketitle

\section{Introduction}

% Role of continuous-variable systems in quantum information processing.
Continuous-variable quantum information deals with continuous degrees of freedom, such as position and momentum quadratures, in quantum systems like optical fields or vibration modes. With a close connection to quantum optical experiments~\cite{walls2007quantum}, continuous-variable systems have been an important grounds for quantum information processing~\cite{Weedbrook_2014}, in parallel with discrete-variable systems (qudits).  

% Role of Gaussian states in CV system
% Need to say something about Gaussian states are useful, and easy to obtain. But have limitations.
Gaussian states and Gaussian operations~\cite{giedke2002characterization,de2015normal,de2017gaussian} especially play important roles in continuous-variable quantum information processing. Despite being in an infinite-dimensional Hilbert space, Gaussian states often enable analytical results in the analysis of quantum information processing, due to their characteristic functions in a Gaussian form. As an example, Gaussian operations---operations that map Gaussian states to Gaussian states---are completely characterized by linear transforms of the mean and covariance~\cite{giedke2002characterization}. Besides the convenience of analytic treatment, it is desirable to restrict ourselves in the Gaussian regime, also because preparation of Gaussian states and application of Gaussian operations are readily accessible in quantum optical experiments~\cite{walls2007quantum}. Not only they are experimentally realizable, but they allow for useful quantum information processing protocols such as quantum teleportation~\cite{vaidman1994teleportation,braunstein1998teleportation,ralpha1998teleportation}, (noisy) quantum cloning~\cite{cerf2000optimal,lindblad2000cloning}, quantum-enhanced sensing~\cite{caves1981quantum,bondurant1984squeezed,tan2008quantum,zhuang2017optimum,zhuang2017distributed} and quantum key distribution~\cite{grosshans2002continuous}.

% non Gaussian states are needed in some cases. list examples. nonconvex resource theories of nonGaussianity were proposed
% Need to mention about these papers that established the resource theory of non-Gaussianity~\cite{marian2013relative,genoni2008quantifying,genoni2010quantifying}.

Unfortunately, such Gaussian schemes are limited in their power of continuous-variable quantum information processing. It has been shown that non-Gaussianity in the form of either non-Gaussian states or non-Gaussian operations are required for entanglement distillation~\cite{eisert2002distilling,giedke2002characterization,fiuravsek2002gaussian,zhang2010distillation}, error correction~\cite{niset2009no}, loophole-free~\footnote{If one trusts the device, then Gaussian states and operations suffices for Bell's inequality violation~\cite{ralph2000proposal,Oliver2018violation}.} violation of Bell's inequality~\cite{banaszek1998nonlocality,banaszek1999testing,filip2002violation,chen2002maximal,nha2004proposed,invernizzi2005effect,garcia2005loophole,ferraro2005nonlocality}, and universal quantum computation~\cite{lloyd1999quantum,bartlett2002universal,ohliger2010limitations,menicucci2006universal}. 
Recently, it has been also shown that any Gaussian quantum resource cannot be distilled if one is restricted to the Gaussian regime~\cite{lami2018gaussian}.  
To quantitatively take into account the necessary resource for these protocols, resource theories of non-Gaussianity have been proposed~\cite{marian2013relative,genoni2008quantifying,genoni2010quantifying,zhuang2018resource}. Resource theories are frameworks for quantifying the amount of resource, with respect to a given set of free states and free operations~\cite{Horodecki2013,Brandao2015,chitambar2018quantum}. Previous proposals establish resource theories in a state-driven manner. They first choose Gaussian states as the free states and define resource measures based on the deviation, e.g. measured by relative entropy, from the set of Gaussian states. Afterwards, Gaussian operations are naturally chosen as the free operations, which preserve the set of free states.
% however, some non-Gaussian states are easy to obtain. Convex mixture of Gausssian states are such states. previous work on witness.
% Then say that some non-Gaussian states are in fact easy to obtain in experiments, for example convex mixture of Gaussian states. Noticing this issue, refs.~\cite{filip2011detecting,genoni2013detecting,park2015testing,hughes2014quantum} propose sufficient conditions that a state is not a convex mixture of Gaussian states---genuine non-Gaussian. Ref.~\cite{filip2011detecting} considers a witness based on the probabilities of vacuum and single-photon states. This witness can show the genuine non-Gaussianity of certain states even with positive Wigner functions. Ref.~\cite{jevzek2011experimental} performed an experiment based on ref.~\cite{filip2011detecting}. Ref.~\cite{genoni2013detecting} showed that the value of Wigner function at origin can be a witness of genuine non-Gaussianity. Ref.~\cite{hughes2014quantum} generalized the witness in Ref.~\cite{genoni2013detecting} to different quasiprobability functions.

However, some non-Gaussian operations are actually easy to implement. A major class of such operations is the class of operations composed by Gaussian operations with adaptive feed-forward on measurement outcomes of Gaussian measurements. 
They can produce probabilistic mixtures of Gaussian states, which are non-Gaussian due to the non-convexity of the set of Gaussian states. 
% @@some common noise actually look like non-Gaussian operations with no genuine non-Gaussianity?
In contrast to state-driven theories, setting Gaussian operations with feed-forward as free operations is more suitable, when accessible operations are being considered. 
The convex hull of the Gaussian states is a natural set of free states because it is invariant under the free operations and also the largest set of states generated by them.
% quantum computation. cannot be done by mixture of Gaussian. even with positive wigner. important to quantify resource. resource theory does that. stabilizer theory.
It is easy to see that all such free states have non-negative Wigner functions~\footnote{The converse is not true. And various ways to test whether a quantum state is in the such a free set are being considered~\cite{filip2011detecting,genoni2013detecting,park2015testing,hughes2014quantum,jevzek2011experimental,Happ2018}}. This implies that the states or operations outside of the free sets are necessary for universal quantum computation, since quantum circuits involving only states with non-negative Wigner function and operations that cannot create negativity can be efficiently simulated classically~\cite{mari2012positive,Veitch2013}. Thus, a resource theory with this choice of free operations and free states is more relevant to universal quantum computation than previous proposals with the set of free states being Gaussian states. 
% It was also shown that a gate that generates a polynomial of the quadrature operators with third-order or higher together with Gaussian operations are sufficient for universal quantum computation in a sense that they can generate Hamiltonians with any polynomial function of quadrature operators\cite{}.
% The cubic phase gate\cite{}, which corresponds to $T$ gate in a qubit case, generates the Hamiltonian with a cubic term in position operator and thus forms a universal set together with free operations. Since the cubic phase gate can be implemented by free operations on the resource state called cubic phase state with gate teleportation technique, the cubic phase 
Indeed, in the discrete-variable case, similar resource theory of quantum computation has been established by taking the stabilizer operations with feed-forward as the free operations and the convex hull of the stabilizer states as free states~\cite{veitch2014resource,howard2017application}. 

%They introduced monotones quantifying magicness of the states outside of the convex hull of the stabilzer states and applied their theory to magic-state distillation protocols.   

% we introduce resource theory of gaussian operation + feed-forward. it is convex unlike previous work. quantify resourcefulness with computatable measure, negativity. show monotonicity. why is the stabilizer theory useful? -> distillation. introduce distillation for nonGaussianity.
% Then say in this paper we quantify the genuine non-Gaussianity, based on negative Wigner function, in analogy to stabilizer resource theory~\cite{veitch2014resource}. Motivate our resource theory by the efficient classical simulation of the convex-Gaussian states and operations.
% Ref.~\cite{mari2012positive} showed that when the initial state and the Choi matrix of all operations are in $W_+$, the quantum computation can be classically simulated efficiently. States with negative Wigner function are naturally considered resource for universal quantum computation, which can be understood as an analogue of magic states in discrete variable case, which are outside of the convex hull of the pure stabilizer states. Like the case of bound entanglement states, there are genuine non-Gaussian state with positive Wigner function---is not useful for computation.

In this paper, we establish a resource theory relevant to universal quantum computation with continuous variable. Its free operations are Gaussian operations combined with feed-forward, which are experimentally easy to implement and generate the convex hull of Gaussian states, which we take as the set of free states. States outside the convex hull of Gaussian states---{\it genuine non-Gaussian states}---are the resource states. As a quantifier of the resourcefulness in quantum computation, we consider {\it the logarithmic negativity of Wigner function}, the logarithm of the integral of the absolute value of the Wigner function. 
It is easily computable, and relevant to universal quantum computation---it bounds the classical simulability~\cite{pashayan2015estimating}.  
We show that it is a valid measure for {\it genuine non-Gaussianity}, which takes zero for the states in convex hull of Gaussian states and satisfies the monotonic property under free operations. 
% One of the biggest differences between the continuous-variable system and finite dimensional system is that the continuous-variable system has
We compute the logarithmic negativity for a number of common resource states and compare them with respect to their mean photon numbers. We find that number states, cubic phase states~\cite{gottesman2001encoding} as well as recently proposed ON states~\cite{sabapathy2018states} have similar values of logarithmic negativity for fixed mean photon numbers.
We further apply our theory to the protocol implementing the cubic phase gate using the ON state~\cite{sabapathy2018states} and show that their protocol is efficient in terms of genuine non-Gaussianity.
%Resource theories are often useful to constrain the range of allowed state transformations under accessible set of operations. The application of the theory to the magic-state distillation protocols is a prominent example of it~\cite{veitch2014resource} where they obtained an upper bound of the number of pure magic states from a given number of noisy magic states.  
%In contrast to the qudit quantum computation, corresponding protocols for distilling resource states with free operations had not been discovered. The difficulty of finding such a protocol in analogy to the known magic-state distillation protocols in qudit systems stems from that these protocols are based on the property of error correcting codes on discrete variables that allow for logical operations that can generate magic states\cite{}. 

To facilitate the preparation of the resource state, we also provide the first distillation protocol that increases the genuine non-Gaussianity. We exploit a partial homodyne measurement idea considered in~\cite{Suzuki2006, Heersink2006} to non-deterministically extract larger genuine non-Gaussianity comparing to the initial value. Our approach is entirely different from the discrete-variable stabilizer quantum computation resource state (magic state) distillation protocols, which are based on the property of discrete-variable error correcting codes~\cite{Dennis2001,Bravyi2005,Reichardt2005,Bravyi2012,Meier2013,Eastin2013}. Due to this different approach, our protocol only requires a single copy of the input state, unlike usual distillation protocols. In analog to ref.~\cite{veitch2014resource}'s application of resource theory on magic-state distillation, we apply our resource theory to this protocol to give an upper bound of the obtainable logarithmic negativity with respect to the success rate of the protocol.
We verify it by numerically computing the logarithmic negativity for input and output states, when the input states are imperfect cubic phase states obtained by applying a cubic phase gate to finitely squeezed states. Conditioned on success, we find that the protocol not only increases the logarithmic negativity but also increases the fidelity from a better cubic phase state---a state obtained by applying a cubic phase gate on a finitely squeezed state with larger squeezing parameter. Thus, it works as ``state distillation'' as well as  ``genuine non-Gaussianity'' distillation. 

This paper is organized as follows. In Sec.~\ref{sec_pre}, we introduce Gaussian states, Gaussian channels and continuous-variable quantum computation. In Sec.~\ref{sec_framework}, we establish the resource theory framework---free states, free operations and a monotone. In Sec.~\ref{sec_states}, we compare the genuine non-Gaussianity of some resource states. In Sec.~\ref{sec_application}, we apply our resource theory on state conversion and distillation. Finally, we conclude in Sec.~\ref{sec_conclusions} by more discussions and future directions.

\section{Preliminaries}
\label{sec_pre}
\subsection{Gaussian states and Gaussian channels}

We use the notation in ref.~\cite{Weedbrook_2012}. 
An $N$-mode bosonic continuous-variable system is described by annihilation operators $\left\{\hat{a}_k, 1\le k \le N\right\}$, which satisfy the commutation relation $\left[\hat{a}_k,\hat{a}_j^\dagger\right]=\delta_{kj}, \left[\hat{a}_k,\hat{a}_j\right]=0$. One can also define real quadrature field operators $\hat{q}_k=\hat{a}_k+\hat{a}_k^\dagger, \hat{p}_k=i\left(\hat{a}_k^\dagger-\hat{a}_k\right)$ and formally define a real vector of operators
$\hat{x}=\left(\hat{q}_1,\hat{p}_1,\cdots, \hat{q}_N,\hat{p}_N\right)$, which satisfies the canonical commutation relation
$
\left[\hat{x}_i,\hat{x}_j\right]=2i {  \Omega}_{ij}.
$
Here ${  \Omega}=i \bigoplus_{k=1}^N {  Y} $, where ${  Y}$ is the Pauli matrix. The mean photon number (power) in mode $k$ is given by the expectation value of operator $\hat{a}^\dagger_k\hat{a}_k=\left(\hat{p}_k^2+\hat{q}_k^2\right)/4-1/2$.

A quantum state $\hat{\rho}$ can be described by its Wigner characteristic function
\be 
\chi\left({  \xi};\hat{\rho}\right)=\tr \left[\hat{\rho} \hat{D}\left({  \xi}\right)\right],
\label{Wigner_characterisitic_function}
\ee
where $  \xi=\left(\xi_1,\cdots \xi_{2N}\right)\in \mathbb{R}^{2N}$ and 
$
\hat{D}\left({  \xi}\right)=\exp\left(i \hat{x}^T {  \Omega} {  \xi} \right)$
is the Weyl displacement operator for all modes.
Under this convention, the displacement operator on position  $\hat{D}_q\left(x\right)\equiv \exp\left(-i\hat{p}x/2\right)$, which satisfies $\hat{D}_q\left(x\right)\ket{y}=\ket{y+x}$. Similarly, $\hat{D}_p\left(P\right)\equiv \exp\left(i\hat{q}P/2\right)$, which satisfies $\hat{D}_p\left(P\right)\ket{m}=\ket{m+P}$. For a pure state $\ket{\psi}$, for simplicity, we will write $\chi\left({  \xi};\ket{\psi}\right)$ and for other similar cases.

%$\chi\left({  \xi};\hat{\rho}\right)$ is continuous in ${  \xi}$ when $\tr \left(\hat{\rho}\right)=1$ (details see appendix).

%von-Neuman's trace inequality $|\tr \left(A B\right)|\le \sum_k a_k b_k$, where $a_k$'s ($b_k$'s) are singular values of square matrices $A$ ($B$) in decreasing order

The Wigner function is defined as the Fourier transform of the Wigner characteristic function
\be
W\left({  x};\hat{\rho}\right)=\int \frac{d ^{2N}{  \xi}}{\left(2\pi\right)^{2N}}\exp\left(-i {x}^T {  \Omega} {  \xi}\right) \chi\left({  \xi};\hat{\rho}\right).
\ee
Note by definition, both the Wigner function and the Wigner characteristic function are linear in $\hat{\rho}$. 
%Wigner function is not neccesarily continuous.
%When $\tr \left(\hat{\rho}\right)=1$, $W\left({  x};\hat{\rho}\right)$ is continuous in x, since $\chi\left({  \xi};\hat{\rho}\right)$ is continuous in ${  \xi}$.

A state $\hat{\rho}$ is Gaussian if its characteristic function has the Gaussian form
\be
\chi \left({  \xi};\hat{\rho}\right)=\exp\left(-\frac{1}{2}{  \xi}^T \left({  \Omega} {  \Lambda } {  \Omega}^T\right){  \xi}-i \left({  \Omega} \overline{  x}\right)^T {  \xi}\right).
\ee
Here the $\overline{  x}=\braket{  \hat{x}}_{\hat{\rho}}$ is the state's mean and 
$
{  \Lambda}_{ij}=\frac{1}{2} \braket{\{\hat{x}_i-\overline{x}_i,\hat{x}_j-\overline{x}_j\}}_{\hat{\rho}}$
is its covariance matrix, where $\{,\}$ is the anticommutator and $\braket{\hat{A}}_{\hat{\rho}}\equiv\tr \left(\hat{A}\hat{\rho}\right)$ for operator $\hat{A}$. 

Gaussian channels are complete-positive and trace-preserving (CPTP) maps that map any Gaussian state to a Gaussian state~\cite{giedke2002characterization,de2015normal,de2017gaussian}. They can be extended to Gaussian unitaries on the input and a vacuum environment (Stinespring dilation)~\cite{giedke2002characterization}, therefore we focus on Gaussian unitaries. 
A Gaussian unitary $\hat{U}_{  S,  d}$ transforms
\be
\hat{U}_{  S,  d}^\dagger \hat{  x} \hat{U}_{  S,  d}=  S \hat{  x}+{  d},
\ee
where ${  d}=\left(d_1,\cdots, d_{2N}\right)$ is the displacement and $  S$ is a matrix.
Commutation relation preserving requires that $\left[\sum_m{  S}_{am}{x}_m+{ d}_a, \sum_n{  S}_{bn}{x}_n+{d}_b\right]=\sum_{mn}{  S}_{am}(2i){  \Omega}_{mn} {  S}_{bn}=2i{  \Omega}_{ab }$, thus
$
{  S}{  \Omega} {  S}^T={  \Omega}.
$
Because ${  S}{  \Omega} \left({  S}^T{  \Omega} {  S}\right)=\left({  S}{  \Omega} {  S}^T\right){  \Omega} {  S}={  \Omega}^2 {  S}=-{  S}$, we have
$
{  S}^T {  \Omega} {  S}={  \Omega}.
$
Since $\det\left({  \Omega}\right)=1$, we have the Jacobian of the linear transform $|\det\left({  S}\right)|=1$. Also we have $\hat{U}_{{  S},{  d}}^\dagger=\hat{U}_{{  S}^{-1},-{  S}^{-1}{  d}}$. Here we give some examples of Gaussian unitaries: single-mode squeezing $\hat{S}\left(s\right)=\exp\left(s\left(\hat{a}^2-\hat{a}^{\dagger2}\right)/2\right)$, where $s$ is the squeezing strength; phase rotation $\hat{R}\left(\theta\right)=\exp\left(-i \theta \hat{a}^\dagger \hat{a}\right)$; displacement operator which has already been introduced.

\begin{lemma}\label{lemm:Wigner_Gaussian_unitary}
Gaussian unitary corresponds to a linear coordinate transform for the Wigner characteristic function and the Wigner function.
\ba
\chi \left({  \xi};\hat{U}_{  S,  d} \hat{\rho} \hat{U}_{  S,  d}^\dagger\right)&=&\chi \left({  S}^{-1}{  \xi};\hat{\rho}\right) \exp\left(i {  d}^T {  \Omega   \xi }\right),
\\
W\left(  x; \hat{U}_{  S,  d}\hat{\rho} \hat{U}_{  S,  d}^\dagger \right)&=&W\left(  S^{-1}\left(  x-  d\right);\hat{\rho}\right).
\ea
\end{lemma}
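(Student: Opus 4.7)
The plan is to handle the characteristic function first and then obtain the Wigner function statement as an immediate consequence via the Fourier transform relating the two.

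For the characteristic function, I would start from the definition and use the cyclic property of the trace,
\begin{equation}
\chi(\xi;\hat{U}_{S,d}\hat{\rho}\hat{U}_{S,d}^\dagger)=\tr\!\left[\hat{\rho}\,\hat{U}_{S,d}^\dagger \hat{D}(\xi)\hat{U}_{S,d}\right],
\end{equation}
so that the whole problem reduces to transforming the Weyl operator under conjugation by $\hat{U}_{S,d}$. Because the quadratures in the exponent of $\hat{D}(\xi)=\exp(i\hat{x}^T\Omega\xi)$ transform linearly as $\hat{U}_{S,d}^\dagger \hat{x}\hat{U}_{S,d}=S\hat{x}+d$, the conjugated operator factorizes into an operator piece $\exp(i\hat{x}^T S^T \Omega \xi)$ times a c-number phase $\exp(i d^T \Omega \xi)$. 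The key algebraic step is then to rewrite $S^T\Omega\xi$ as $\Omega S^{-1}\xi$, which follows from the symplectic relation $S^T\Omega S=\Omega$ established just before the lemma. This identifies the operator piece with $\hat{D}(S^{-1}\xi)$ and yields the claimed formula.

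For the Wigner function, I would substitute the characteristic function result into the Fourier transform definition of $W$ and change variables $\eta=S^{-1}\xi$. The Jacobian is unity since $|\det S|=1$, again a consequence of $S^T\Omega S=\Omega$. In the exponent, the cross terms $x^T\Omega S\eta$ and $d^T\Omega S\eta$ can be rewritten as $(S^{-1}x)^T\Omega\eta$ and $(S^{-1}d)^T\Omega\eta$ by pulling $S$ across $\Omega$ using $\Omega S=(S^T)^{-1}\Omega$. The remaining integral is exactly the Wigner function of $\hat{\rho}$ evaluated at $S^{-1}(x-d)$.

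There is no real obstacle here beyond careful bookkeeping of the symplectic identities; the content of the lemma is that the transformation of $\hat{x}$ under $\hat{U}_{S,d}$ induces an affine change of variable in phase space, and essentially every step is a direct application of $S^T\Omega S=\Omega$ together with the definitions of $\chi$ and $W$. The mildest care is required at the sign/transpose level when moving $\Omega$ past $S$, so I would keep the symplectic identity written out at each place where it is invoked to avoid errors.
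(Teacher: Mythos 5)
Your proposal is correct and follows essentially the same route as the paper's proof: cyclicity of the trace, the linear transformation $\hat{U}_{S,d}^\dagger\hat{x}\hat{U}_{S,d}=S\hat{x}+d$ inside the Weyl operator with the scalar phase splitting off, the identity $S^T\Omega=\Omega S^{-1}$, and then a unit-Jacobian change of variables in the Fourier transform using $\Omega S=(S^{-1})^T\Omega$. The only detail the paper adds is a remark that the cyclic property of the trace in this unbounded setting is justified via the Fubini--Tonelli theorem.
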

The proof is attached in Appendix~\ref{proof_lemma1}. In discrete-variable systems, it is known that the Clifford unitaries are permutations on the discrete Wigner functions~\cite{veitch2014resource}. \lemm{Wigner_Gaussian_unitary} shows the analog between Gaussian unitaries and Clifford unitaries. 

In following, we list the well-known properties of Wigner functions.

\begin{lemma}
\label{lemm:Wigner}
Wigner functions satisfy the following.
\begin{enumerate}[label=\text{(\ref{lemm:Wigner}.\arabic*)},wide, labelwidth=!,labelindent=0pt]
   \item  \label{Wigner_trace}
    Consider bipartite state $\hat{\rho}_{AB}$ with two parts $A$ and $B$. Let $x_A, x_B$ denote the variable in Wigner function associated with each part. Then 
    $
    W\left(x_B; {\rm Tr}_A \left(\hat{\rho}_{AB}\right) \right)= \int d x_A W\left(x_A,x_B; \hat{\rho}_{AB}\right).
    $ 
   \item  \label{Wigner_measurement}
    The probability of homodyne measurement on the $m$-th mode described by $\{\state{q_m}\}$ has result $q_m$ with probability
    $
    P_m\left(q_m\right)=\int d p_m \prod_{k\neq m}d^2x_k W\left(x; \hat{\rho} \right).
    $ 
    Similar results hold for projection on multiple modes.
   \item  \label{Wigner_product}
    Consider product state $\hat{\rho}_{AB}$ with two parts $A$ and $B$. Then 
    \be
    W\left(x_A,x_B; \hat{\rho}_{A}\otimes \hat{\rho}_{B} \right)= W\left(x_A; \hat{\rho}_{A} \right) W\left(x_B;  \hat{\rho}_{B} \right).
    \ee
   \item \label{Wigner_post}
    Suppose one performs a homodyne measurement on a single mode A and get the result $\tilde{q}_A$, the Wigner function of the post-measurement state $\hat{\rho}_{B|\tilde{q}_A}={\braket{\tilde{q}_A|\hat{\rho}_{AB}|\tilde{q}_A}}/\tr_B\left( \braket{\tilde{q}_A|\hat{\rho}_{AB}|\tilde{q}_A}\right) $ is
    % \be 
    % W\left(x_B; \frac{\braket{\tilde{q}_A|\hat{\rho}_{AB}|\tilde{q}_A}}{\tr_B \left(\braket{\tilde{q}_A|\rho_{AB}|\tilde{q}_A}\right)}\right)=
    % \frac{\int dp_A W\left(p_A, \tilde{q}_A,x_B; \hat{\rho}_{AB}\right)}{\int d x_B\int dp_A W\left(p_A, \tilde{q}_A,x_B; \hat{\rho}_{AB}\right)}.
    % \ee 
    \be 
    W\left(x_B; \hat{\rho}_{B|\tilde{q}_A}\right)=
    \frac{\int dp_A W\left(p_A, \tilde{q}_A,x_B; \hat{\rho}_{AB}\right)}{\int d x_B\int dp_A W\left(p_A, \tilde{q}_A,x_B; \hat{\rho}_{AB}\right)}.
    \ee 
    Similar results can be obtained for measuring multiple modes, by considering the measurement sequentially.
   \item \label{Wigner_trace_states}
   The trace of the product of two single-mode quantum states can be evaluated as follows,
   $
    \tr\left(\hat{\rho} \hat{\sigma}\right)=4\pi \int d x W\left(x;\hat{\rho}\right)W\left(x;\hat{\sigma}\right).
   $
   \item \label{Wigner_xspace}
   For the single-mode pure state with position space wave function $\psi\left(q\right)$, we have
   \be
   W\left(q,p; \ket{\psi} \right)=\frac{1}{2\pi}\int_{-\infty}^\infty dy \psi^\star \left(q-y\right)\psi\left(q+y\right) e^{-i p y}.
   \ee
  \end{enumerate}
\end{lemma}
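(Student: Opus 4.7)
The unifying strategy is to work throughout with the Wigner characteristic function $\chi(\xi;\hat\rho)=\tr[\hat\rho\hat D(\xi)]$ of \eqref{Wigner_characterisitic_function} rather than directly with $W$: it is linear in $\hat\rho$, it factorizes on tensor products since $\hat D(\xi_A,\xi_B)=\hat D_A(\xi_A)\otimes\hat D_B(\xi_B)$, and zeroing out a block of its arguments realizes a partial trace. Each claim then reduces to a short Fourier-analytic step converting a statement about $\chi$ into one about $W$. For (\ref{Wigner_trace}), the factorization gives $\chi(0,\xi_B;\hat\rho_{AB})=\tr_B[\tr_A(\hat\rho_{AB})\hat D_B(\xi_B)]=\chi(\xi_B;\tr_A\hat\rho_{AB})$; taking the inverse symplectic Fourier transform in $\xi_B$ then identifies the marginal of $W(x_A,x_B;\hat\rho_{AB})$ over $x_A$ with $W(x_B;\tr_A\hat\rho_{AB})$. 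For (\ref{Wigner_product}) the same observation applied to a product state yields $\chi(\xi_A,\xi_B;\hat\rho_A\otimes\hat\rho_B)=\chi(\xi_A;\hat\rho_A)\chi(\xi_B;\hat\rho_B)$, and the inverse Fourier transform splits as a product.

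For (\ref{Wigner_measurement}), I would first apply (\ref{Wigner_trace}) to integrate out the other modes, reducing the claim to the single-mode identity $\langle q|\hat\rho|q\rangle=\int dp\,W(q,p;\hat\rho)$, which I would establish by writing $\langle q|\hat D(\xi_q,\xi_p)|q\rangle$ in the position basis and Fourier-inverting. Statement (\ref{Wigner_post}) uses the same marginalization logic applied to the (unnormalized) conditional operator $\langle\tilde q_A|\hat\rho_{AB}|\tilde q_A\rangle$ on $B$: the numerator follows from (\ref{Wigner_trace}) combined with the single-mode $p$-marginal identity, and the denominator is $P_A(\tilde q_A)$ from (\ref{Wigner_measurement}).

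For (\ref{Wigner_trace_states}) I would invoke the displacement-operator orthogonality $\tr[\hat D(\xi)\hat D(\eta)^\dagger]\propto\delta^{(2)}(\xi-\eta)$, which makes $\{\hat D(\xi)\}$ an operator-valued basis; expanding both $\hat\rho$ and $\hat\sigma$ against this basis with coefficients $\chi(\,\cdot\,;\hat\rho)$ and $\chi(\,\cdot\,;\hat\sigma)$ collapses one integral against the delta function and, after Plancherel in $\xi\leftrightarrow x$, gives $4\pi\int dx\,W(x;\hat\rho)W(x;\hat\sigma)$. Finally (\ref{Wigner_xspace}) follows by direct computation: substitute $\hat\rho=\ket{\psi}\bra{\psi}$ into the definition of $\chi$, evaluate $\hat D(\xi_q,\xi_p)=\exp[i(\xi_p\hat q-\xi_q\hat p)]$ in the position basis via a Baker--Campbell--Hausdorff splitting of the quadratic phase, and perform the $\xi_q,\xi_p$ integrals; the $\xi_q$ integral produces a delta function that pins the shift between $\psi^*$ and $\psi$ to $2y$. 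The main non-routine obstacle throughout is bookkeeping: the $2\pi$ prefactors, the sign of the symplectic phase, and the factor of $2$ in the commutation relation $[\hat x_i,\hat x_j]=2i\Omega_{ij}$ used in this paper must all be tracked carefully so that the constants $1/(2\pi)$ in (\ref{Wigner_xspace}) and $4\pi$ in (\ref{Wigner_trace_states}) come out exactly right.
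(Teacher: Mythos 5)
Your proposal is correct and, for the two items the paper actually proves in its appendix --- \ref{Wigner_trace_states} and \ref{Wigner_xspace} --- it follows essentially the same route: a direct position-basis computation with a Baker--Campbell--Hausdorff splitting of $\hat{D}(\xi)$ for the pure-state formula, and a delta-function collapse for the overlap formula (the paper's explicit $\int dq\,dp$ step producing $\delta(\xi_1+\xi_1')\delta(\xi_2+\xi_2')$ is exactly your displacement-orthogonality/Plancherel argument written out). The remaining four properties are declared well-known and not proved in the paper; your characteristic-function sketches for them are standard and sound, up to one harmless labeling slip in \ref{Wigner_xspace}: it is the $\xi_2$ (i.e.\ $\xi_p$) integral that yields the position delta function, while the factor of $2$ in the shift between $\psi^\star$ and $\psi$ comes from the $\hbar=2$ convention $e^{-i\hat{p}\xi_1}\ket{x}=\ket{x+2\xi_1}$ rather than from the delta function itself.
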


Note that \ref{Wigner_xspace} is different from ref.~\cite{ghose2007non}, and \ref{Wigner_trace_states} has a factor of $4\pi$, because of the choice of $\hbar=2$, which is the convention in ref.~\cite{Weedbrook_2012}. To clarify these two points, we show their proof in Appendix~\ref{Convention_proof}. The proof of the other properties are well-known, straightforward and not presented here.

\subsection{Continuous-variable quantum computation}
\label{intro:cv_computation}
We consider the definition of continuous-variable universal quantum computation in ref.~\cite{lloyd1999quantum}, i.e. a set of operations is universal if by a finite number of applications of operations in the set, one can approach arbitrarily close to any unitary evolution generated by Hamiltonians of a polynomial form in operators $\hat{q}_k$'s and $\hat{p}_k$'s. Under this definition of universality, ref.~\cite{lloyd1999quantum} also shows that Gaussian operations alone are not universal, since Gaussian unitaries corresponds to generators of second order polynomials in $\hat{q}_k$'s and $\hat{p}_k$'s. However, an arbitrary extra unitary with generators of higher order than two, in addition to Gaussian operations, will be universal. Based on this finding, ref.~\cite{sefi2011decompose} developed a systematical way of performing the decomposition of any unitary generated by polynomial Hamiltonian to a basic set of Gaussian unitaries $\left\{e^{i\pi\left(\hat{p}^2+\hat{q}^2\right)/2}, e^{i t_1\hat{q}}, e^{i t_2\hat{q}^2} \right\}$ and the cubic phase gate
\be
\hat{V}(\gamma)=e^{i\gamma \hat{q}^3}.
\ee
The choice of the non-Gaussian unitary is not unique and cubic phase gate, generated by $\hat{q}^3$, is basically one of the most simple non-Gaussian unitaries, in the sense that it is generated by the lowest polynomial of order higher than two.

There are a number of experimental proposals of realizing the cubic phase gate, involving genuine non-Gaussian resource states and Gaussian operations combined with feed-forward~\cite{gottesman2001encoding,marek2011deterministic,ghose2007non,sabapathy2018states}. Ref.~\cite{gottesman2001encoding} (GKP scheme) uses the cubic phase state
\ba
 \ket{\gamma} = \hat{V}(\gamma)\ket{0}_p = \int dq e^{i\gamma q^3}\ket{q}
\ea
as the resource state, where $\ket{0}_p$ is the zero-momentum state at the infinite squeezing limit and unnormalizable. They also provide a scheme for preparing approximate cubic phase states by two-mode squeezing, displacement and photon number counting. Ref.~\cite{ghose2007non} analyzes the above approximate preparation scheme in details, and provide alternative schemes with number state $\ket{N}$ as the resource state. Ref.~\cite{marek2011deterministic} provides a scheme by sequential photon number subtractions and displacement to produce approximate weak cubic phase states, which is experimentally implemented in ref.~\cite{yukawa2013emulating}. Ref.~\cite{sabapathy2018states} further introduces the ON state 
\be
\ket{ON}=\frac{1}{\sqrt{1+|a|^2}}\left(\ket{0}+a\ket{N}\right)
\label{eq:ON}
\ee
as a replacement of cubic phase state to realize the cubic phase gate.  

Ref.~\cite{arzani2017polynomial} shows that instead of genuine non-Gaussian states as the resource, it is possible to have non-Gaussian measurements such as photon number counting to enable cubic phase gates. This is similar to the idea of Gaussian cluster state measurement-based quantum computation~\cite{menicucci2006universal}, where the measurement, e.g. photon number counting, is non-Gaussian. In this paper, we focus on the case where genuine non-Gaussian states are the necessary resource for universal quantum computing combined with Gaussian operations and feed-forward.

\section{Resource theory framework}
\label{sec_framework}

\subsection{Free states and free operations}
Resource theories are frameworks that deal with quantification and manipulation of a quantity that is considered resource under some setting. The states that are not resourceful are called free states and operations that cannot create any resource from free states are called free operations. The set of free states is invariant under the free operations. 
In some situations where one is more interested in the property of states, it is natural to define the set of free states first and choose a corresponding set of free operations. An example of such resource theories is the theory of coherence~\cite{aberg2006quantifying, Baumgratz2014}, where free states are naturally defined as incoherent states---the set of diagonal states with respect to some fixed basis. Depending on the focus, one can investigate various sets of free operations. For example, in coherence theory, many sets of free operations have been considered such as incoherent operations~\cite{Baumgratz2014}, strictly incoherent operations~\cite{winter2016operational, Yadin2016}, maximally incoherent operations~\cite{aberg2006quantifying, Chitambar2016} etc., and each set has its own resource quantifier and restriction on state transformation rules. 

On the other hand, for other settings where one is more interested in physically available operations, it is more natural to define the set of free operations as easily accessible operations and choose a set of free states invariant under the application of free operations. An example of such setting is the case when two parties are restricted to the set of local operation and classical communications (LOCC)~\cite{nielsen1999conditions}. As we set LOCC as free operations, separable states are naturally found to be a set of free states and other states---called entangled states---are the resource states. 

Here, we take the latter view to consider continuous-variable systems relevant to quantum optics. In such a setting, easily accessible operations are Gaussian operations and homodyne measurements. It is also reasonable to allow feed-forward---controlling further operations conditioned on the outcome of the measurements. We call the above operations combined as {\it Gaussian protocols} and we take them as our free operations. 

\begin{definition}
\label{deff:Gprotocol}
  An operation is a Gaussian protocol if it is composed by the following operations. 
  \begin{enumerate}
  \renewcommand{\labelenumi}{(\arabic{enumi}).}
   \item  \label{item:unitary}
  Gaussian unitaries, $\hat{\rho}\to \hat{U}_{S,d}\hat{\rho}\hat{U}_{S,d}^\dagger$.
   \item  \label{item:ancilla}
   Composition with ancillary vacuum states, $\hat{\rho}\to \hat{\rho}\otimes \state{0}$
   \item  \label{item:homodyne}
   Homodyne measurement described by POVM $\{\state{q}\}$.
   \item \label{item:partial}
   Partial trace.
   \item \label{item:conditional}
   The above quantum operations conditioned on the outcomes of homodyne measurements.% or classical randomness.
  \end{enumerate}
\end{definition}

Because general Gaussian measurement is equivalent to first performing a general Gaussian unitary and then homodyne~\cite{giedke2002characterization}, one only needs to consider homodyne measurement. Note that the above set of operations includes performing the above operations conditioned on classical randomness. This is because one can generate classical randomness by performing homodyne measurements on ancillae. By dividing the measurement outcomes on ancillae into proper regions, random numbers with an arbitrary probability distribution can be generated. The above definition of Gaussian protocols based on elementary operations is equivalent to the following definition. 

\begin{lemma} \label{lemm:Gprotocol2}
The set of Gaussian protocols defined in \deff{Gprotocol} is equivalent to the set of operations composed by the following.
\begin{enumerate}
\renewcommand{\labelenumi}{(\arabic{enumi}).}
 \item
  Composition with ancillary vacuum states, $\hat{\rho}\to \hat{\rho}\otimes \state{0}$
 \item 
  Homodyne measurement on one mode and conditional Gaussian channel on the other modes, $\int dq \Phi_q \otimes M_q$ where $\{\Phi_q\}$ are Gaussian channels and $M_q(\cdot) = \state{q}\cdot \state{q}$.  
\end{enumerate}
\end{lemma}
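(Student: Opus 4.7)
The plan is to establish the claimed set equality by proving containment in both directions. For the easy direction, I would verify that each of the two operations in the lemma lies in the class built from the definition. Operation (1) of the lemma is verbatim item (2) of the definition. For operation (2), I would decompose $\int dq\,\Phi_q\otimes M_q$ as a homodyne measurement on the measured mode (item (3) of the definition) followed by the conditional Gaussian channel $\Phi_q$ on the remaining modes. Every Gaussian channel admits a Stinespring dilation consisting of a Gaussian unitary on system-plus-vacuum-ancilla together with a partial trace, as already stated in the preliminaries, so each $\Phi_q$ is realized by items (1), (2), (4); the $q$-dependence is supplied by the feed-forward item (5).

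For the harder direction I would show that every elementary operation in the definition can be synthesized from a finite sequence of the two operations of the lemma. Adding a vacuum ancilla (item (2) of the definition) is operation (1) of the lemma. A homodyne measurement on one mode (item (3)) is operation (2) of the lemma with $\Phi_q$ equal to the identity channel on the remaining modes. A Gaussian unitary $\hat U_{S,d}$ on the system (item (1)) can be produced by first appending a vacuum ancilla and then invoking operation (2) where the ancilla is the measured mode and $\Phi_q(\cdot):=\hat U_{S,d}\cdot\hat U_{S,d}^\dagger$ is chosen independent of $q$. A partial trace over one mode (item (4)) is realized by homodyning that mode via operation (2) with trivial $\Phi_q$ and integrating out the resulting classical-register mode; since $\int dq\,|q\rangle\langle q|=\hat I$, this reproduces the partial trace exactly. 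Feed-forward (item (5)) is absorbed directly into the allowed $q$-dependence of $\Phi_q$ in operation (2).

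The main obstacle is handling feed-forward that conditions a late operation on a measurement outcome obtained several steps earlier, because the lemma's operation (2) only conditions on its own current homodyne outcome. My approach is a principle-of-deferred-measurements rewriting: push every Gaussian unitary and ancilla preparation past any earlier homodyne by promoting it to a conditional operation allowed by item (5), until the whole protocol factorizes into blocks, each consisting of a single homodyne measurement followed by a conditional Gaussian channel on the remaining modes. Each such block is precisely one application of the lemma's operation (2), possibly preceded by operation (1) steps preparing ancillas. The technical heart of the proof is verifying this rearrangement carefully, and in particular checking that stored outcomes from earlier measurements can be re-accessed when needed---by treating the already-measured modes as classical registers and re-measuring them via trivial operation (2) steps whose $\Phi_q$ implements the desired feed-forward on still-quantum modes.
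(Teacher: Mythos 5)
Your proposal is correct and follows essentially the same route as the paper: both directions are handled by noting that each elementary operation of Definition~\ref{deff:Gprotocol} is a (possibly conditioned) Gaussian channel realizable as one vacuum-ancilla attachment plus one homodyne-with-conditional-channel block, and conversely that each block of the lemma decomposes into the elementary operations via Stinespring dilation and feed-forward. The only difference is that you spell out the multi-round feed-forward rearrangement (via deferred measurements and re-reading measured modes as classical registers) which the paper's proof dismisses as straightforward; your extra care there is sound and arguably an improvement in rigor rather than a different approach.
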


\begin{proof}
Let $\mathcal{O}_1$ be the set of Gaussian protocols defined in \deff{Gprotocol} and $\mathcal{O}_2$ be the set of operations realized by the above two elementary operations.  
Since (\ref{item:unitary}), (\ref{item:ancilla}), and (\ref{item:partial}) in \deff{Gprotocol} can realize any Gaussian channel in dilution form, together with (\ref{item:conditional}) we get $\mathcal{O}_2\subseteq \mathcal{O}_1$. To show $\mathcal{O}_1 \subseteq \mathcal{O}_2$, notice that (\ref{item:unitary}), (\ref{item:homodyne}), and (\ref{item:partial}) in \deff{Gprotocol} are Gaussian channels. They can be realized by attaching ancillary vacuum state, making a homodyne measurement on the vacuum state, and applying the corresponding Gaussian channels no matter what the measurement result is. (\ref{item:conditional}) can be straightforwardly realized again noticing that (\ref{item:unitary})-(\ref{item:partial}) are Gaussican channels. 
\end{proof}

Now that we set up our free operations, we shall find a set of free states that is invariant under free operations. The set of Gaussian states is not appropriate for it due to its non-convexity property. A simple probabilistic mixture of Gaussian operations, which are free operations, may transform a Gaussian state to non-Gaussian states composed of a mixture of Gaussian states. Indeed, an appropriate set of free states is the convex hull of Gaussian states.

\begin{definition}
Denote the set of Gaussian states as $\calG$, the states outside are called non-Gaussian. We consider the convex hull  
\be
\bar{\calG}=\left\{\int d\lambda  P_\lambda \hat{\rho}_\lambda\mid \hat{\rho}_\lambda\in \calG, P_\lambda \geq 0, \int d\lambda  P_\lambda=1 \right\}.
\ee 
We call states in $\bar{\calG}$ as convex-Gaussian states and states outside $\bar{\calG}$ as genuine non-Gaussian states. 
\end{definition}

It is clear by definition that any two convex-Gaussian states are connected by some Gaussian protocol and $\bar{\calG}$ is the largest set of states that can be prepared by Gaussian protocols from Gaussian states. The following lemma greatly simplifies the expression of $\bar{\calG}$. 

\begin{lemma} 
The convex hull of all Gaussian pure states equals $\bar{\calG}$.
\end{lemma}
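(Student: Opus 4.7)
The plan is to verify the two inclusions. One direction, that the convex hull of pure Gaussian states is contained in $\bar{\calG}$, is immediate since every pure Gaussian state is a Gaussian state. The work is in the other direction: given $\hat\rho=\int d\lambda\, P_\lambda\,\hat\rho_\lambda\in\bar{\calG}$ with each $\hat\rho_\lambda$ a (possibly mixed) Gaussian state, I want to exhibit $\hat\rho$ as a convex combination of pure Gaussian states. The natural reduction is to first prove the statement for a single mixed Gaussian state and then splice the two mixtures together by Fubini, since the outer measure $P_\lambda$ on Gaussian states and the inner measure on pure Gaussian states combine into a single probability measure on pure Gaussian states.

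To handle a single mixed Gaussian state $\hat\rho_{\Lambda,\bar x}$ with covariance $\Lambda$ and mean $\bar x$, the idea is to split $\Lambda$ into a ``quantum'' and a ``classical'' part. By Williamson's theorem one can find a symplectic $S$ that brings $\Lambda$ to normal form, which makes explicit a pure Gaussian covariance $\Lambda_0$ saturating $\Lambda_0+i\Omega\geq 0$ together with a positive semi-definite matrix $\Lambda-\Lambda_0\geq 0$. Then the Gaussian Wigner function of $\hat\rho_{\Lambda,\bar x}$ factors as a convolution of the Wigner function of the pure Gaussian state $\hat\rho_{\Lambda_0,\bar x}$ with a classical Gaussian probability density $p(y)$ of covariance $\Lambda-\Lambda_0$. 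Translating this convolution back to the operator level yields
\begin{equation}
\hat\rho_{\Lambda,\bar x}=\int d^{2N}y\, p(y)\,\hat D(y)\,\hat\rho_{\Lambda_0,\bar x}\,\hat D^\dagger(y),
\end{equation}
which displays $\hat\rho_{\Lambda,\bar x}$ as a genuine convex combination of pure Gaussian states, because displacements map pure Gaussian states to pure Gaussian states.

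Inserting this decomposition for each $\hat\rho_\lambda$ into the defining mixture of $\hat\rho$ gives
\begin{equation}
\hat\rho=\int d\lambda\, P_\lambda\int d^{2N}y\, p_\lambda(y)\,\hat D(y)\,\hat\rho_{\Lambda_\lambda^{(0)},\bar x_\lambda}\,\hat D^\dagger(y),
\end{equation}
and the product measure $P_\lambda\, p_\lambda(y)\, d\lambda\, d^{2N}y$ is a probability measure on the set of pure Gaussian states. Hence $\hat\rho$ lies in the convex hull of pure Gaussian states, completing the nontrivial inclusion.

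The main obstacle is the first step, namely the existence of the pure-plus-classical-noise decomposition of an arbitrary mixed Gaussian state; I would lean on Williamson's normal form and the standard fact that a thermal state admits a Gaussian P-representation over coherent states, then pull the construction back through the symplectic transformation. The only other subtlety is a measure-theoretic one: ensuring that the combined measure is well-defined and that the exchange of integrals is justified, which is routine for the Gaussian weights appearing here.
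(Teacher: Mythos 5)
Your proof is correct and follows essentially the same route as the paper's: both reduce a mixed Gaussian state via Williamson's normal form to a thermal-type state and use its Gaussian $P$-function (equivalently, your classical-noise convolution of displacements) to exhibit it as a mixture of pure Gaussian states, then combine with the outer mixture. You spell out the covariance splitting $\Lambda=\Lambda_0+(\Lambda-\Lambda_0)$ and the Fubini step more explicitly than the paper, which states the argument in one line, but the underlying idea is identical.
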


\begin{proof}
This lemma is mentioned in ref.~\cite{genoni2013detecting} without proof. For completeness, we make the proof explicit. Any Gaussian state can be transformed to a thermal state by Gaussian unitary. For single mode case, $\hat{\rho}_\lambda=\hat{U}_\lambda \hat{\sigma}^{\rm th}_\lambda \hat{U}_\lambda=\int d^2\alpha p_\alpha \hat{U}_\lambda\ket{\alpha}\bra{\alpha}\hat{U}_\lambda^\dagger $, where $p_\alpha>0$ is the P-function of the thermal state and $\ket{\alpha}$ is the coherent state. The multi-mode case is similar.
\end{proof}

All states in $\bar{\calG}$ have non-negative Wigner functions. Denote the set of sates with non-negative Wigner functions as $W_+$~\footnote{We include states of Wigner function with negative values at measure zero points in this set. These measure zero negative points should be irrelevant to any experiments in reality. }. We have~\cite{filip2011detecting,genoni2013detecting} 
$
\calG \subsetneq \bar{\calG} \subsetneq W_+.
$

This choice of free operations and free states is the most relevant to universal quantum computation with continuous-variable systems. 
Ref.~\cite{mari2012positive} showed that when the initial state and the Choi matrix of all operations are in $W_+$, the quantum computation can be classically simulated efficiently. Genuine non-Gaussian states (with negative Wigner function) are naturally considered resource for universal quantum computation. Note states in $W_+\setminus\bar{\calG}$ are analogs of bound genuine non-Gaussian states, which do not enable universal quantum computation together with free states and free operations.

\subsection{Monotone}

Besides free states and free operations, another important concept of resource theories is monotones. Monotones are maps from a quantum state to a real number that is meant to quantify how resourceful the state is. Here, we are interested in quantum states acting on an infinite-dimensional Hilbert space. If a quantum state describes a $N$-mode bosonic system, the state is expressed by a density operator $\hat{\rho}$ acting on the Hilbert space $\mathcal{H}^{\otimes N}$. We formally define monotones as following.

\begin{definition}
 Let $D(\mathcal{H}^{\otimes N})$ be a space of the density operators for $N$-mode bosonic systems and  $\mathcal{M}_N:D(\mathcal{H}^{\otimes N})  \rightarrow \mathbb{R}$ be a map from the set of quantum states for $N$-bosonic systems to real numbers. Let $\mathcal{M}(\hat{\rho}) \equiv \mathcal{M}_N (\hat{\rho})$ for $\hat{\rho} \in D(\mathcal{H}^{\otimes N})$ so that $\mathcal{M}$ automatically takes into account the number of the modes in the system that the input state belongs to. 
 $\mathcal{M}$ is called genuine non-Gaussian monotone if it satisfies the following.
 \begin{enumerate}
  \item 
 $\mathcal{M}(\hat{\rho}) \geq 0$ and $\mathcal{M}(\hat{\rho}) = 0,\ \forall \hat{\rho} \in \bar{\calG}$  
 \item 
  It does not increase under a Gaussian protocol. To clarify what it means, consider the elementary operations for the Gaussian protocols in \lemm{Gprotocol2}. We say that $\mathcal{M}$ does not increase under a Gaussian protocol if (1) $\mathcal{M}(\hat{\rho})\geq \mathcal{M}(\hat{\rho} \otimes \state{0})$ and (2) $\mathcal{M}(\hat{\rho}) \geq  \mathcal{M}\left(\int dq P(q)\hat{\sigma}_q\right)$ and $\mathcal{M}(\hat{\rho}) \geq \int dq P(q) \mathcal{M}\left(\hat{\sigma}_q\right)$ where $P(q) = \Tr[(\hat{I} \otimes \hat{M}_q) \hat{\rho}]$ and $ \hat{\sigma}_q = \Phi_q \otimes M_q (\hat{\rho})/{P(q)}$. 
 \end{enumerate}
 \label{deff:mono_average}
\end{definition}

The last inequality is sometimes called selective monotonicity because it states that expectation value after a selective measurement cannot increase. It is not a very standard requirement for other resource theories; it is more common to only require the monotonicity under free operations. Nevertheless, we take the selective monotonicity as a requirement as well because it will be relevant to a non-deterministic distillation protocol which we shall discuss later. 
% It is worthwhile noting that if the monotone satisfies convexity, the selective monotonicity implies the monotonicity under free operations. 
Although there are many candidates for monotones, in this paper we focus on the {\it logarithmic negativity of the Wigner function}.  
We pick it up because it is easily computatble and it is a relevant resource measure for universal quantum computation. The definition and properties are given as follows (proof in Appendix.~\ref{proof_mana})

\begin{lemma}
\label{lemm:mana}
Logarithmic negativity of the Wigner function
\be 
\calN_L\left(\hat{\rho}\right)=\ln \int d^{2N} x |W\left(x; \hat{\rho}\right)|.
\ee 
satifies the following properties.

\begin{enumerate}[label=\text{(\ref{lemm:mana}.\arabic*)},wide, labelwidth=!,labelindent=0pt]

\item \label{mana_unitary}
Invatiant under Gaussian unitaries.

\item \label{mana_partial_trace}
Non-increasing under partial trace.

\item \label{mana_additive}
Additive. $\calN_L\left(\hat{\rho}_{A}\otimes \hat{\rho}_{B}\right)=\calN_L\left(\hat{\rho}_{A}\right)+\calN_L\left( \hat{\rho}_{B}\right)$.

\item \label{mana_zero}
$\calN_L\left(\hat{\rho}\right)=0$, iff $\hat{\rho}\in W_+$. 

\item \label{mana_channel}
Non-increasing under Gaussian channel $\Phi_\calG$. $\calN_L\left(\Phi_\calG\left(\hat{\rho}\right)\right)\le \calN_L\left(\hat{\rho}\right)$.

%\item \label{mana_convex}
%Convex. $\calN_L\left(\int d\lambda P_\lambda \hat{\sigma}_\lambda \right)\le \int d\lambda P_\lambda \calN_L\left(\hat{\sigma}_\lambda \right)$.

\item  \label{mana_prob}
Non-increasing under free operations in the sense of \deff{mono_average}. 
% Consider free operation $\calE$, suppose $\calE\left(\hat{\rho}\right)=\int d\lambda P_\lambda \hat{\sigma}_\lambda$, then $\calN_L\left(\hat{\rho}\right)\ge \int d\lambda P_\lambda \calN_L\left(\hat{\sigma}_\lambda\right)$.

\end{enumerate}
\end{lemma}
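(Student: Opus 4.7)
The plan is to prove the six properties more or less in the order listed, because the later ones rely on the earlier ones. The proofs are all essentially manipulations of the Wigner function using \lemm{Wigner_Gaussian_unitary} and \lemm{Wigner}, together with the triangle inequality $|\int f|\le \int|f|$ and the fact that $W$ is normalized to $1$.

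For \ref{mana_unitary}, I would substitute the transformation law from \lemm{Wigner_Gaussian_unitary} into the integral and change variables $x\mapsto Sx+d$; since $|\det S|=1$, the integral of $|W|$ is unchanged. For \ref{mana_partial_trace}, I would apply \ref{Wigner_trace} and then move the absolute value inside the $x_A$ integration by the triangle inequality, after which taking logs yields the monotonicity. For \ref{mana_additive}, I would factorize $|W(x_A,x_B;\hat\rho_A\otimes\hat\rho_B)|=|W(x_A;\hat\rho_A)||W(x_B;\hat\rho_B)|$ via \ref{Wigner_product} and observe that $\ln$ converts the resulting product of integrals into a sum. For \ref{mana_zero}, I would use the fact that $\int W\,d^{2N}x = \tr(\hat\rho)=1$: the $(\Rightarrow)$ direction is immediate, while for $(\Leftarrow)$, $\calN_L(\hat\rho)=0$ forces $\int(|W|-W)=0$, and the nonnegative integrand must therefore vanish almost everywhere, placing $\hat\rho$ in $W_+$.

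For \ref{mana_channel}, I would invoke Stinespring dilation of a Gaussian channel $\Phi_\calG$ as a Gaussian unitary on the input tensored with a vacuum ancilla, followed by partial trace. Since $\calN_L(|0\rangle\!\langle 0|)=0$ (the vacuum is Gaussian, hence in $W_+$), additivity gives $\calN_L(\hat\rho\otimes\state{0})=\calN_L(\hat\rho)$; the Gaussian unitary preserves $\calN_L$ by \ref{mana_unitary}, and the final partial trace can only decrease it by \ref{mana_partial_trace}.

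The main work, and the main obstacle, is \ref{mana_prob}. By \lemm{Gprotocol2} it suffices to check the two elementary operations. Attaching a vacuum ancilla is handled by \ref{mana_additive} and \ref{mana_zero} as above. The nontrivial case is the homodyne-plus-conditional-Gaussian-channel operation $\int dq\,\Phi_q\otimes M_q$. I would let $\hat\sigma_q$ denote the post-measurement state on the remaining modes, use \ref{Wigner_post} to write $P(q)W(x_B;\hat\sigma_q)=\int dp_A\,W(p_A,q,x_B;\hat\rho_{AB})$, and let $\hat\tau_q=\Phi_q(\hat\sigma_q)$. Two chained triangle inequalities, combined with \ref{mana_channel} applied to each $\Phi_q$, yield
\begin{equation}
\int dq\,P(q)\int dx_B\,|W(x_B;\hat\tau_q)|\;\le\;\int d^{2N}x\,|W(x;\hat\rho_{AB})|,
\end{equation}
which is exactly $\exp(\calN_L(\hat\rho_{AB}))$ on the right. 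For the averaged monotonicity, I pull the absolute value outside the $q$-integral by the triangle inequality and take logs. For the selective monotonicity, I instead apply Jensen's inequality to the concave function $\ln$: $\int dq\,P(q)\,\calN_L(\hat\tau_q)\le \ln\int dq\,P(q)\exp(\calN_L(\hat\tau_q))$, and the right-hand side is again bounded by the same quantity. The delicate part is keeping the normalizations consistent and making sure the same master bound works for both versions of the monotonicity simultaneously.
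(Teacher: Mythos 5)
Your proposal is correct and follows essentially the same route as the paper's proof: the same coordinate-change/triangle-inequality arguments for \ref{mana_unitary}--\ref{mana_zero}, the same Stinespring-dilation reduction for \ref{mana_channel}, and for \ref{mana_prob} the same reduction via \lemm{Gprotocol2} to the homodyne-plus-conditional-channel step, with the triangle inequality giving the mixture bound and Jensen's inequality for $\ln$ giving the selective monotonicity. Your observation that a single ``master bound'' yields both versions of monotonicity is a clean packaging of what the paper does in two separate but parallel chains of inequalities.
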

Since $\bar{\calG} \subsetneq W_+$ and states in $W_+\setminus\bar{\calG}$ have zero logarithmic negativity, logarithmic negativity is not a faithful monotone. However, states in $W_+\setminus \bar{\calG}$ cannot be resources for universal quantum computation, and should be considered as bound genuine non-Gaussian states, as the analog of bound entangled states~\cite{horodecki1998mixed,divincenzo2000evidence} and bound magic states~\cite{Veitch2012}, where the distillable resource is zero. 
% In this sense, logarithmic negativity is analogy to distillable entanglement.
In general, although logarithmic negativity is usually not analytically calculable, it can be easily obtained numerically when the Wigner function is available analytically.

\section{Resource states}
\label{sec_states}
\begin{figure}[htbp]
    \centering
    \includegraphics[width=0.3\textwidth]{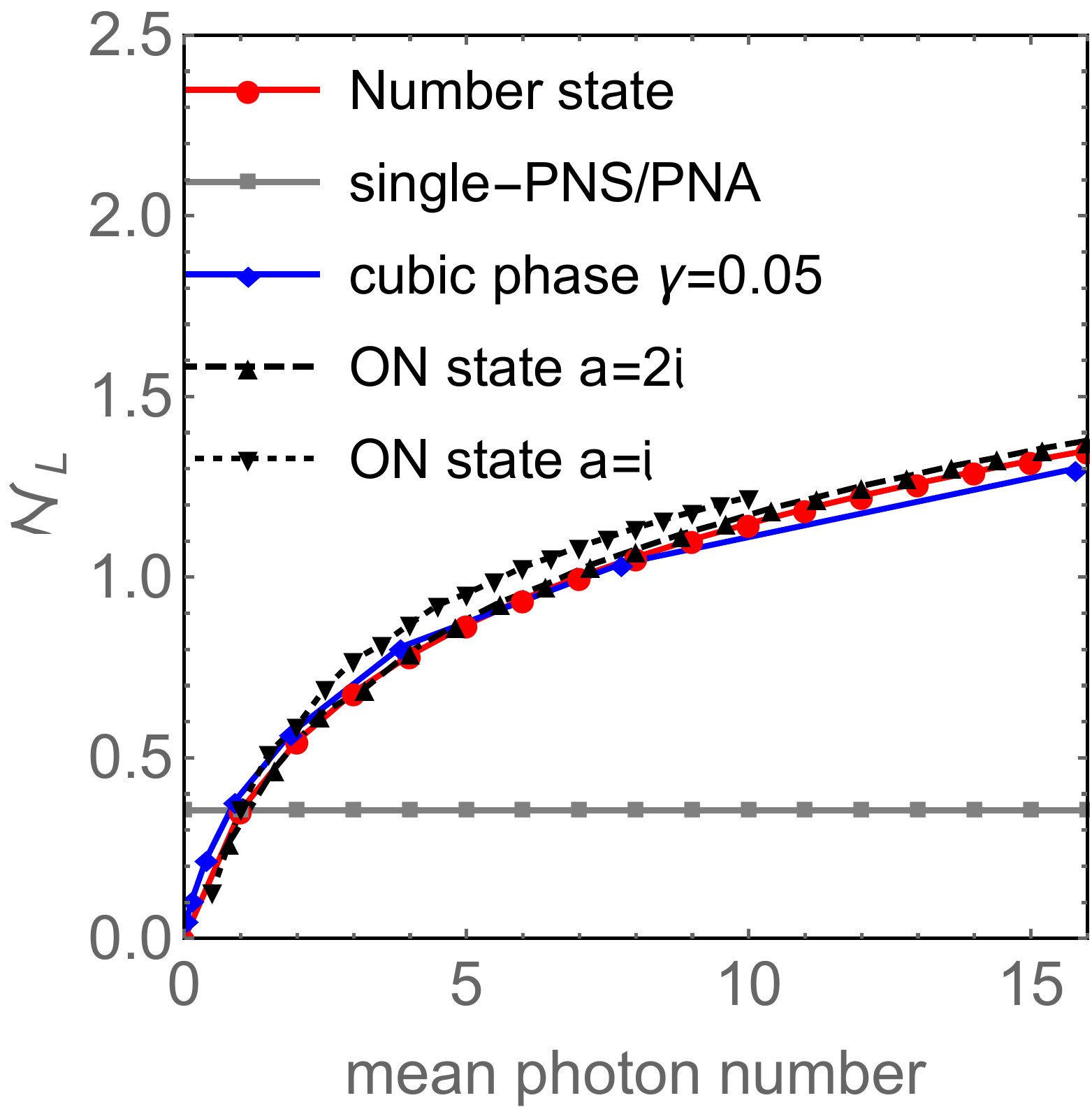}
    \caption{Genuine non-Gaussianity measured by logarithmic negativity of some resource states. PNS: photon-number subtraction. PNA: photon-number addition. Note that for single-PNS/PNA, we are plotting the maximum genuine non-Gassianity.} 
    \label{fig:NLall}
\end{figure}

As discussed in section~\ref{intro:cv_computation}, cubic phase states, number states and ON states can be used as resource states to facilitate continuous-variable universal quantum computation, when combined with Gaussian protocols. In this section, we will compute their genuine non-Gaussianity measured by the logarithmic negativity; we will also compare their genuine non-Gaussianity when each state has the same mean photon number. For completeness, we also consider photon-number added or subtracted states, which have been shown to improve the entanglement of two-mode Gaussian states~\cite{navarrete2012enhancing}. 
We numerically obtain the logarithmic negativity in Fig.~\fig{NLall}. Surprisingly, number states, ON states, and cubic phase states with the same mean photon number (same energy) have very close amount of the negativity even though the description of the cubic phase states looks very different from the other two.
Below, we make comments on these resource states to clarify the meaning of Fig.~\fig{NLall}.

{1. Number states}. Photon number state $\ket{N}$ is the most common non-Gaussian source. And it has been known that for fixed photon number, number states and its superpositions maximize the non-convex version of non-Gaussianity~\cite{genoni2010quantifying,genoni2008quantifying}. The Wigner function is given in Appendix~\ref{app_wigners}.

{2. Single photon added and subtracted states}. Ideal single photon-number addition (PNA) operation and photon-number subtraction (PNS) operation can be described by the annihilation and creation operators $\hat{a}$ and $\hat{a}^\dagger$~\cite{kim2008scheme,navarrete2012enhancing}. Experimental schemes of PNS and PNA can be found in Refs.~\cite{parigi2007probing,fiuravsek2009engineering,marek2008generating,kitagawa2006entanglement,namekata2010non,fiuravsek2005conditional,wakui2007photon}. Conditioned on success, they map a pure state $\ket{\psi}$ to another pure state $\propto\hat{a}\ket{\psi}$ (PNS) or $\propto\hat{a}^\dagger\ket{\psi}$ (PNA). For single-photon-subtracted or added zero-mean Gaussian state, the Wigner functions are analytically calculable~\cite{walschaers2017entanglement,walschaers2017statistical} (details in Appendix~\ref{app_wigners}). Here we consider the zero-mean single-mode photon-number subtracted or added pure state with the maximum logarithmic negativity. It suffices to consider PNS or PNA on the squeezed pure state $\ket{\theta,s}=\hat{R}\left(\theta\right)\hat{S}\left(s\right)\ket{0}$. Note that $\hat{a}\hat{R}\left(\theta\right)=e^{i\theta}\hat{R}\left(\theta\right)\hat{a}$ and that Gaussian unitary does not change logarithmic negativity, we only need to consider $\theta=0$. Afterwards, a simple numerical calculation shows that for all values of $s$, both the single-photon-added state and single-photon-subtracted state has
\be
\calN_L=\ln\left(4/\sqrt{e}-1\right)\simeq 0.354.
\ee
This equals $\calN_L\left(\ket{1}\right)$ exactly, which coincides with the conclusion by using non-convex version of non-Gaussianity~\cite{genoni2007measure}.

{3. Cubic phase states}.
%The cubic phase state $\ket{\gamma} = V(\gamma)\ket{0}_p$ is introduced in ref.~\cite{gottesman2001encoding} as a resource state for universal quantum computation. We can also define $\ket{\gamma,P}=V(\gamma)\ket{P}_p=V(\gamma) \exp\left(i P \hat{q}/2\right)\ket{0}_p=\exp\left(i P \hat{q}/2\right) \ket{\gamma}$, which has the same amount of $\calN_L$.
% Because $\ket{\gamma,p}=\exp\left(-i p\hat{q}\right)\ket{\gamma,0}$, from lemma~\ref{lemm:Wigner_Gaussian_unitary} we have
% \be 
% W\left(x,p; \ket{\gamma,P}\bra{\gamma,P}\right)=W\left(x,p-P; \ket{\gamma,0}\bra{\gamma,0}\right).
% \ee 
The ideal cubic phase state cannot be normalized, thus is not physical. The unnormalized Wigner functions of $\ket{\gamma}$ is given in Appendix~\ref{app_wigners}.
% We consider a finite squeezing version of cubic phase state
% \be 
% \ket{\gamma, s}=V(\gamma)S\left(\ln R\right) \ket{0}=\frac{1}{\left(2\pi R\right)^{1/4}}\int dq e^{i \gamma q^3}e^{-q^2/\left(4R\right)} \ket{q}.
% \ee 
%  We can compare the wigner function of finite $R$ and the cubic phase state. For the Wigner function of cubic phase state, we normalize it in the region of plot.
We instead consider an imperfect cubic phase state created by applying the cubic phase gate to the finitely squeezed state with squeezing parameter $s$, namely,
\ba
 \ket{\gamma,P,s}  = \hat{V}(\gamma)\ket{P,s}_p,
 \label{eq:cubic_phase_definition}
\ea
where $\ket{P,s}_p=\hat{D}_p\left(P\right)\hat{S}\left(-s\right)\ket{0}$. The wave function of $\ket{\gamma, P, s}$ in q-space is given by
\begin{align}
 \psi(q) = (2\pi e^{2s})^{-1/4}\exp\left(i\gamma q^3 - \frac{q^2}{4e^{2s}}+i\frac{P q}{2}\right).
 \label{eq:wfunc_cubic}
\end{align}
As the squeezing strength $s$ increases, it approaches the ideal cubic phase state.
The Wigner function of $\ket{\gamma,P,s}$ is given in Appendix~\ref{app_wigners}. The mean photon number can be analytically obtained (details also in Appendix~\ref{app_wigners}) as
\be
N_S=\frac{1}{2}\left(\cosh(2s)-1\right)+18 \gamma^2 e^{4s}+\frac{1}{4}\left(P+6\gamma e^{2s}\right)^2.
\label{eq:Ns_cubic_phase}
\ee
We consider the state with $P=-6\gamma e^{2s}$ in Fig.~\fig{NLall}, which has the minimum mean photon number, when we compare the genuine non-Gaussianity.

{4. ON states}. The wavefunction is given in Eq.~\eq{ON}. The Wigner functions of ON states are given in Appendix~\ref{app_wigners}. The mean photon number of the ON state is $\frac{|a|^2}{1+|a|^2}N$.

\section{Applications}
\label{sec_application}

\subsection{State conversions}
Consider a state transformation $\hat{\rho} \rightarrow \int dq P(q) \hat{\sigma}_q$ under a Gaussian protocol where $P(q) = \Tr\left[\Phi_q \otimes M_q(\hat{\rho})\right]$ in the sense of \lemm{Gprotocol2}. The monotonicity of the logarithmic negativity $\calN_L(\hat{\rho}) \geq \int dq P(q) \calN_L (\hat{\sigma}_q)$ serves as a necessary condition for such a transformation to be possible under a Gaussian protocol. As an example, let us take a closer look at the protocol in ref.~\cite{sabapathy2018states} which implements a cubic phase gate using ON state, which is defined in Eq.~\eq{ON}.  It starts with input state $\ket{\psi}$ and ancillary ON state $\ket{ON}$ and apply the continuous version of controlled-NOT gate from $\ket{\psi}$ to $\ket{ON}$. The homodyne measurement is made on the ancillary system and Gaussian feed-forward operation is applied to the other system. 
They showed that when $N=3$ and $|a|\ll 1$ where $a$ is the parameter of $\ket{ON}$ in Eq.~\eq{ON}, the output state can be approximated by $\frac{1}{P(\tilde{q})}\hat{A}_{\tilde{q}} \hat{V}(\gamma)\ket{\psi}$ where $\tilde{q}$ is the outcome of the homodyne measurement, $P(\tilde{q})$ is the probability density of obtaining the outcome $P(\tilde{q})$, and $\hat{A}_{\tilde{q}}=\exp\left[-(\hat{q} +\tilde{q})^2/4\right]$ is a noise factor. $\gamma$ is related to $a$ by $\gamma = \left(-i/\sqrt{6}\right)a$. 
It can be seen as a state transformation under a Gaussian protocol with $\hat{\rho}$ being $\ket{\psi}\otimes \ket{ON}$ and $\sigma_{\tilde{q}}$ being $\frac{1}{P(\tilde{q})}\hat{A}_{\tilde{q}} \hat{V}(\gamma)\ket{\psi}$, so the monotonicity relation applies. 
For instance, suppose $\ket{\psi}$ is the infinitely-squeezed state. Since the squeezed state has zero logarithmic negativity and the logarithmic negativity is additive, the initial logarithmic negativity is $\calN_L(\ket{03})$ where $\ket{03}$ is the ON state with $N=3$. 
$\hat{\sigma}_{\tilde{q}}$ in this case is proportional to $\int dq \exp\left[-(q+\tilde{q})^2/4 + i\gamma q^3\right]\ket{q}$. 
In ref.~\cite{sabapathy2018states}, they examined $P(\tilde{q})$ for various $\ket{\psi}$ and observed that for squeezed states the distributions are Gaussian-like and become flatter as the squeezing level increases.
We obtained that for $\gamma = 0.1$, $\calN_L(\ket{03}) = 0.11$ and $\calN_L(\hat{\sigma}_{\tilde{q}}) \sim 0.09$ for various $\tilde{q}$ we examined. It not only confirms the monotonicity relation but asserts that it is quite an efficient protocol in terms of genuine non-Gaussianity in this case. The monotonocity relation would be useful in general to get insights to the relationship between output states and corresponding probability density for various input states $\ket{\psi}$. 

If $\ket{\psi}$ is a squeezed state and the outcomes $\tilde{q} \sim 0$ are postselected, it works as a non-derterministic conversion to the imperfect cubic phase state in Eq.~\eq{wfunc_cubic} and the success probability of such a protocol is bounded by the ratio between the initial logarithmic negativity to the output logarithmic negativity. 
 As we shall see in the next section, the imperfect cubic phase state in Eq.~\eq{wfunc_cubic} may be further purified by another protocol, and the level of purification and the success probability have a similar constraint due to the monotonicity of the logarithmic negativity. 
 
 Finally, although in ref.~\cite{sabapathy2018states} only the ON state of $N=3$ is discussed for the implementation of the cubic phase gate, it is possible that larger $N$ would be helpful to realize a better cubic phase gate since it may allow tuning up to higher order terms. The logarithmic negativity of the ON states for various $N$ in Fig.~\fig{NLall} serves as a bound for quality of the implemented gate and success probability of such protocols.   

\subsection{Distillation of genuine non-Gaussianity}
Although genuine non-Gaussianity is necessary for some tasks such as universal quantum computation, it is usually hard to prepare states with large genuine non-Gaussianity. One can then ask whether it is possible to non-deterministically increase the genuine non-Gaussianity of states only by Gaussian protocols and postselection. Here, we provide such a protocol which only consists of a beam splitter, homodyne measurement and postselection. It is so simple that it should be readily implementable in experiments. The protocol is based on the partial homodyne measurement that allows for distillation of coherent-state superpositions~\cite{Suzuki2006} and distillation of squeezing under non-Gaussian noise~\cite{Heersink2006}.
Intuitively, it works as a filter function that focuses on some region of the Wigner plane and reduce the contributions from the other regions. By tuning the strategy of postselection, one can somewhat engineer the output state. We apply this idea to increase the genuine non-Gaussianity by focusing on the region on the Wigner plane that has more negativity than the other. It turns out that the same protocol also allows for distillation of cubic phase state in the sense that it increases the fidelity of imperfect cubic phase states from a ``better'' cubic phase state, which may be of interest on its own.

\begin{figure*}
    \centering
    \subfigure[]{
    \includegraphics[width=0.22\textwidth]{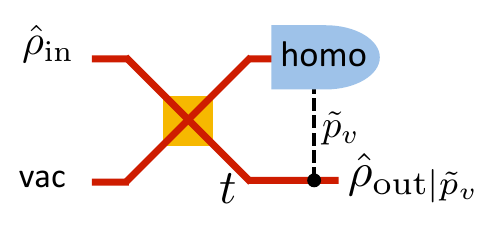}
    \label{fig:distillation}
    }
    \subfigure[]{
    \includegraphics[scale=0.19]{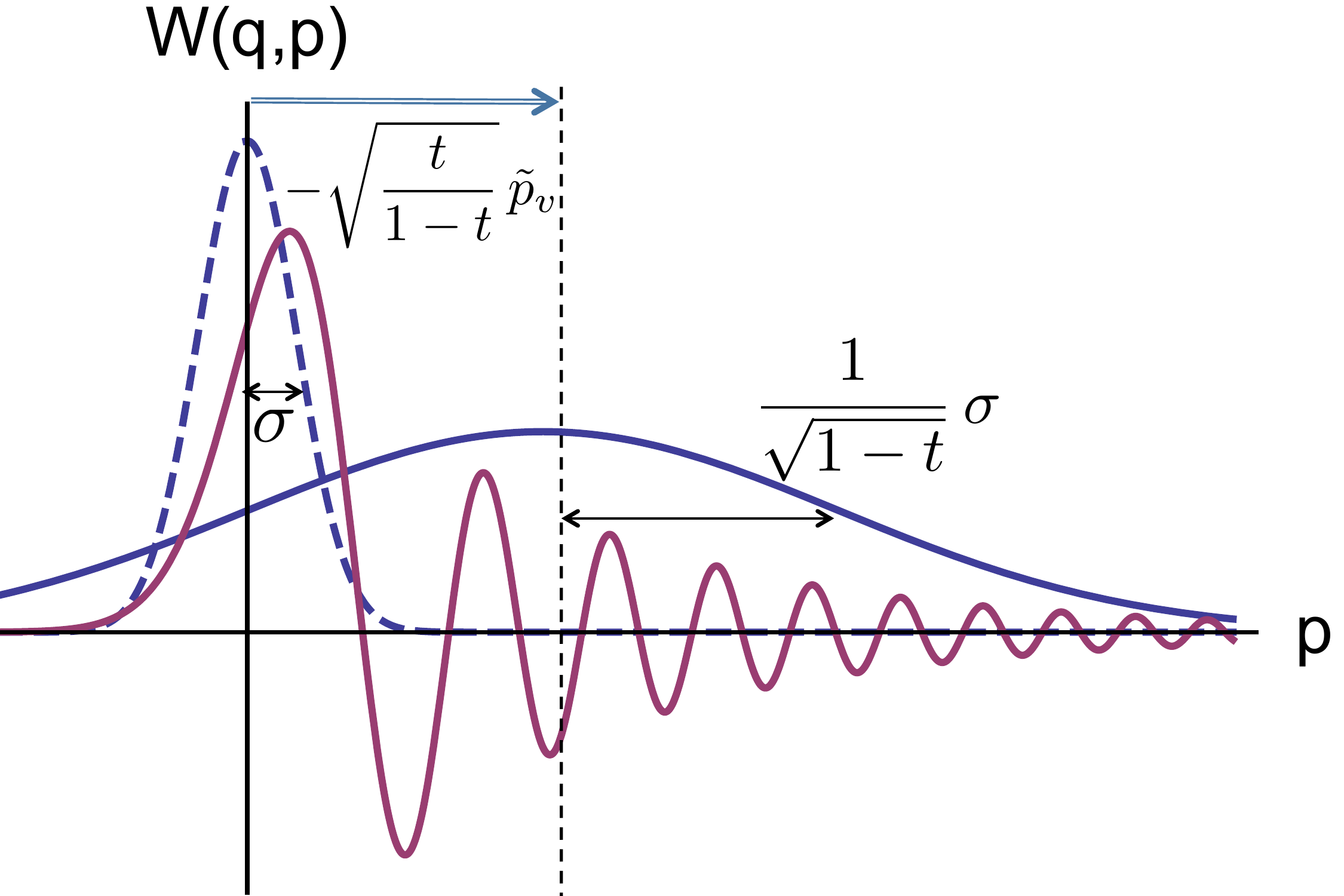}
    \label{fig:homodyne_mechanism}
    }
    \centering
    \subfigure[]{
    \includegraphics[width=0.19\textwidth]{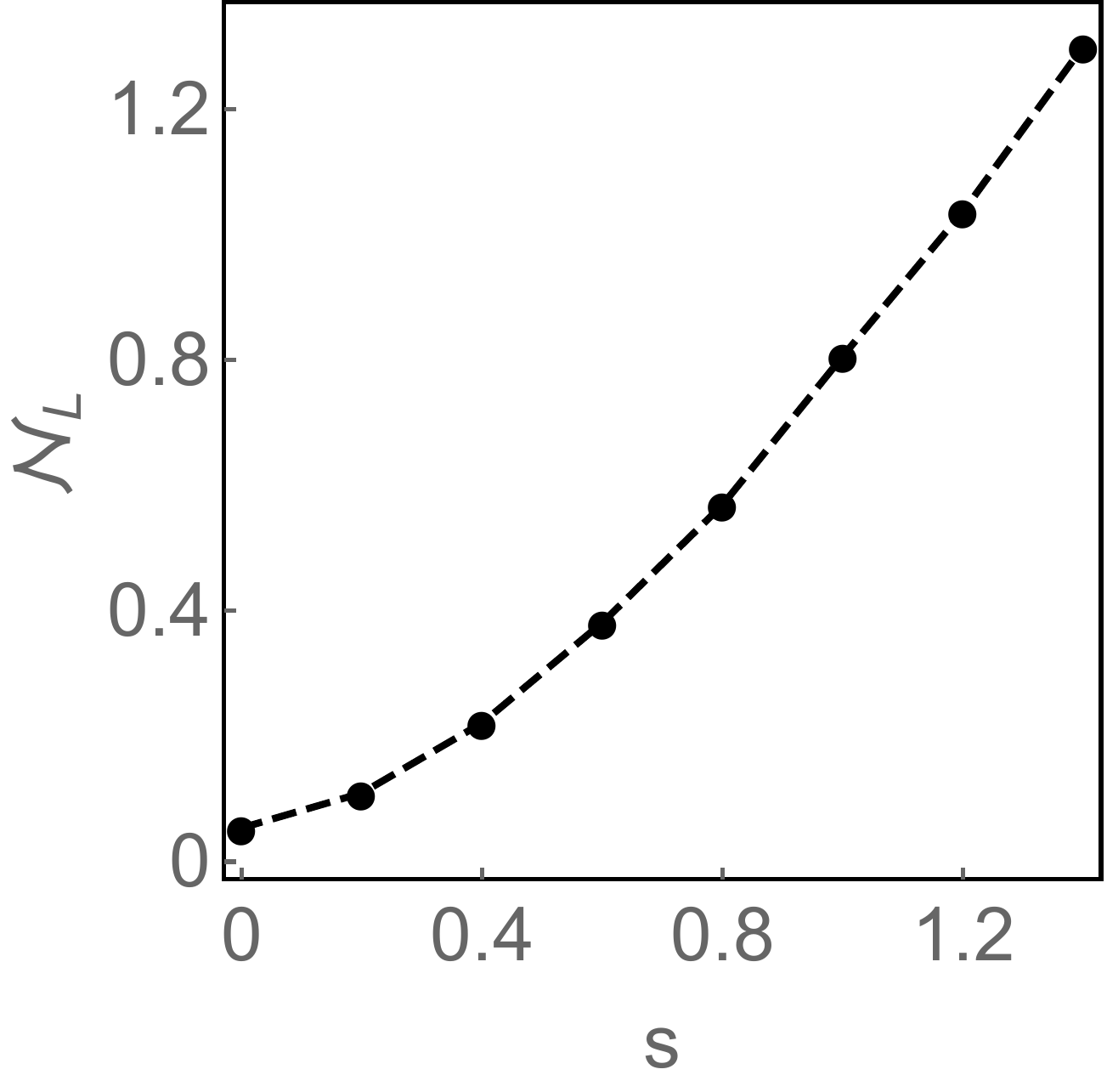}
    \label{fig:NL_s}
    }
    \centering
    \subfigure[]{
    \includegraphics[width=0.23\textwidth]{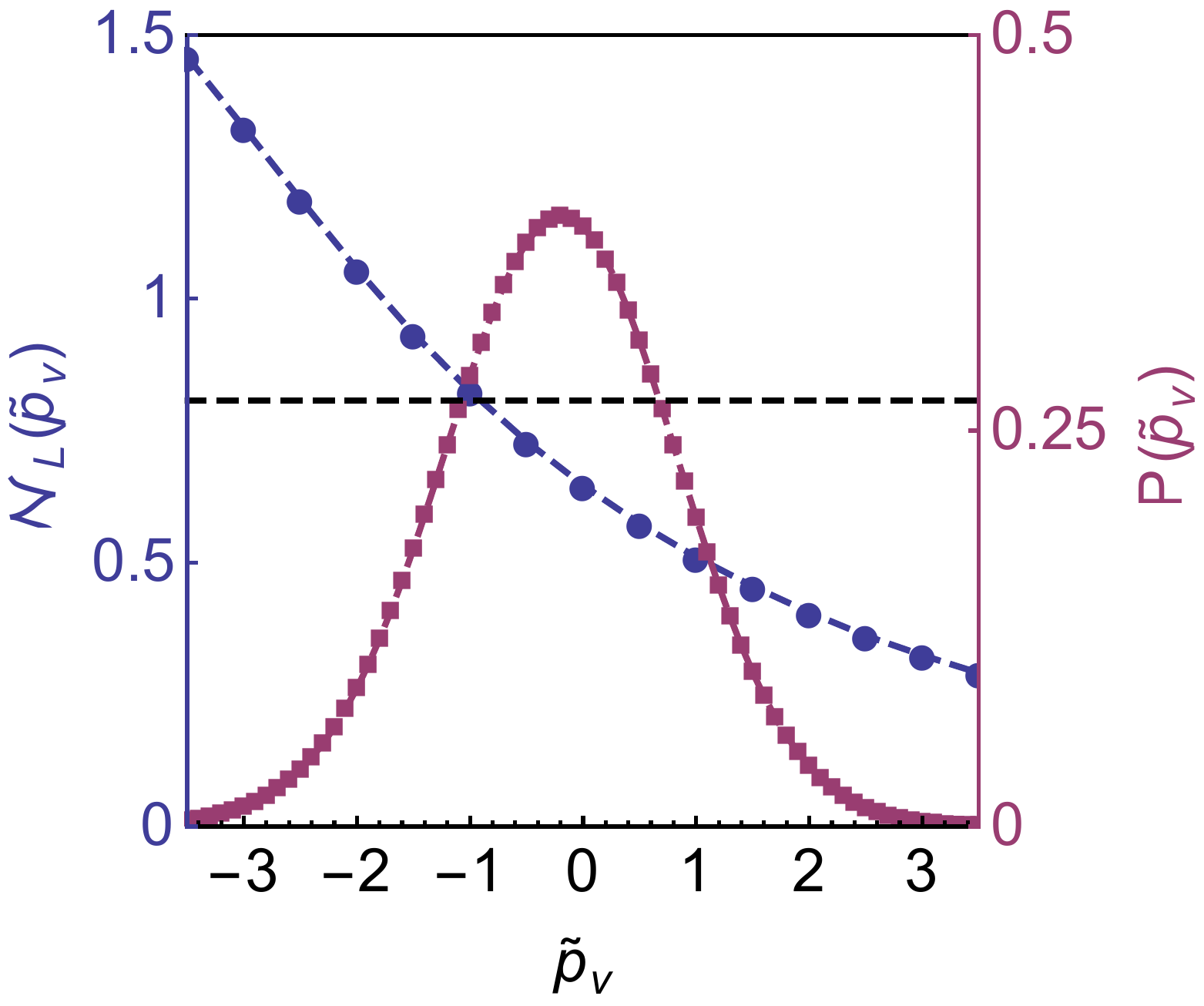}
    \label{fig:NL_prob_pvmeas}
    }
    \caption{
    (a) Schematic of the distillation protocol. The yellow box is a beam splitter with transmittance $t$. ``Homo'' means homodyne measurement. The dashed line denotes the postsection by the homodyne measurement result $\tilde{p}_v$. ``Vac'' denotes the vacuum ancilla. 
    (b) Mechanism of the distillation  protocol. The cross section of the Wigner function at $q=0$ is shown as functions of $p$. Dotted blue curve represents the input vacuum state with standard deviation $\sigma=1$ and purple curve represents the input resource state. In this figure, the Wigner function of an imperfect cubic phase state is plotted. The partial homodyne shifts the Gaussian distribution by $-\sqrt{t/(1-t)}\,\tilde{p}_v$ where $t$ is the transmittance of the beam splitter and $\tilde{p}_v$ is the outcome of the homodyne measurement. The case of a negative $\tilde{p}_v$ is shown. The Wigner function of the resource state is also shifted in the opposite direction by $-\sqrt{(1-t)/t}\,\tilde{p}_v$, but it is not shown in the figure because the shift is small when $1-t\ll 1$. The Wigner function of the output state is the product of the shifted Gaussian distribution and the slightly shifted distribution of the resource state. The shifted Gaussian works as a filter function that focuses on the region which has a large contribution to the negativity.
    (c) Logarithmic negativity $\calN_L$ of imperfect cubic phase states $\ket{\gamma,P,s}$ with $\gamma=0.05$, $P=0$ in terms of squeezing parameter $s$.
    (d) Logarithmic negativity $\calN_L(\tilde{p}_v)$ (blue dots) and the probability density $P(\tilde{p}_v)$ (purple squares) for the case when the input state is $\ket{\gamma,P,s_{ini}}$ with $\gamma=0.05$, $P=0, s_{ini}=1$ in terms of outcome of the homodyne measurement $\tilde{p}_v$. The horizontal dotted line is the logarithmic negativity of the input state $\calN_L^{ini}$.
    } 
\end{figure*}

The setup of the protocol is described in Fig.~\fig{distillation}. Suppose we have a beam splitter with transmittance $t\simeq 1$ and homodyne detector that measures the momentum quadrature $\hat{p}$. The input state $\hat{\rho}_{in}$ and vacuum state $\ket{0}$ are mixed by the beam splitter. The transformation at the beam splitter is
$
 \hat{q}' = \sqrt{t} \hat{q} + \sqrt{1-t} \hat{q}_v,\ \  \hat{p}' = \sqrt{t} \hat{p} + \sqrt{1-t} \hat{p}_v, \ \
 \hat{q}_v' = \sqrt{t} \hat{q}_v - \sqrt{1-t} \hat{q}, \ \ \hat{p}_v' = \sqrt{t} \hat{p}_v - \sqrt{1-t} \hat{p},
$
where $\hat{q}$, $\hat{p}$, $\hat{q_v}$, $\hat{p_v}$ are the quadrature operators for input state and the vacuum and prime denotes the operator after the transform.
The momentum of the outgoing vacuum state is measured at the homodyne detector and postselection is made based on the measurement outcome. Specifically, the state is discarded if the measured momentum is not within the prespecified region, which can be tuned by the experimenter. From~\ref{Wigner_product}, the initial Wigner function of the input state and vacuum state is
\ba
 W_{\rm ini}(q,p,q_v,p_v) = W(q,p;\hat{\rho}_{\rm in})W(q_v,p_v;\ket{0}).
\ea
Using \lemm{Wigner_Gaussian_unitary}, the Wigner function of the total output state right after the beam splitter transformation is
\ba
W_{\rm f}(q,p,q_v,p_v) &= &W(q'_{inv}, p'_{inv};\hat{\rho}_{\rm in})
\nonumber
\\
&&\times W(q'_{v,inv}, p'_{v,inv}; \ket{0}),
\ea
where 
$
 q_{inv}' = \sqrt{t} q - \sqrt{1-t} q_{v}, \ \ p'_{inv} = \sqrt{t} p - \sqrt{1-t} p_v,
 \ \ 
 q_{v,inv}' = \sqrt{t} q_v + \sqrt{1-t} q, \ \ p_{v,inv}' = \sqrt{t} p_v + \sqrt{1-t} p.
$
From~\ref{Wigner_post}, the output state $\hat{\rho}_{\rm out}$ conditioned that the measurement result $\tilde{p}_v$ has the Wigner function 
\ba
 W(q,p;\hat{\rho}_{{\rm out}|\tilde{p}_v} ) =  \frac{\int dq_v W_{\rm f}(q,p,q_v,\tilde{p}_v)}{\int dqdpdq_v W_{\rm f}(q,p,q_v,\tilde{p}_v)}.
 \label{eq:Wout}
\ea

Note that since the exponent of $W_{\rm f}$ is quadratic with respect to $q_v$, the integration of $q_v$ can be carried out analytically. 
% Since the probability of obtaining $\tilde{p}_v$ is measure zero, we need to specify the region for which we accept the output. Let $\tilde{p}_v^{-}$ and $\tilde{p}_v^{+}$ be the thresholds for the postselection. Then, the Wigner function of the output state after the postselection is 
% \ba
%  W_{out}(q,p,\tilde{p}_v^{-};\tilde{p}_v^{+}) =  \frac{\int_{\tilde{p}_v^{-}}^{\tilde{p}_v^{+}} d\tilde{p}_v \int dq_v W_{\rm f}(q,p,q_v,\tilde{p}_v)}{\int_{\tilde{p}_v^{-}}^{\tilde{p}_v^{+}} \int dqdpdq_v W_{\rm f}(q,p,q_v,\tilde{p}_v)}
% \ea
Let $\tilde{p}_v^{-}$ and $\tilde{p}_v^{+}$ be the thresholds for the postselection. We call the protocol successful and keep the output state when $\tilde{p}_v \in [\tilde{p}_v^{-}, \tilde{p}_v^{+}]$. The probability density of obtaining the measurement outcome $\tilde{p}_v$ is 
\ba
 P(\tilde{p}_v)=\int dqdpdq_v W_{\rm f }(q,p,q_v,\tilde{p}_v)
\ea
and the success probability $P_{suc}$ is obtained by
\ba
 P_{suc} = \int_{\tilde{p}_v^{-}}^{\tilde{p}_v^{+}} d\tilde{p}_v  P(\tilde{p}_v).
 \label{eq:psuc}
\ea

Here, we give an intuition why our protocol allows to increase the negativity. At the beginning, the input state and vacuum state are uncorrelated and the total Wigner function is just a product of these two. The beam splitter with transmittance $t\simeq 1$ generates a slight correlation between them, although it does not alter the input state significantly because we assume $1-t \ll 1$. However, it still allows the homodyne measurement to partially extract the information about the input state and give a kickback by the measurement. The Wigner function after the beam splitter depends on the measurement outcome $\tilde{p}_v$ and it is a product of the slightly shifted Wigner function of the input resource state and a shifted Gaussian distribution from the vacuum state. How much these two are shifted depends on the measured momentum $\tilde{p}_v$, and the shifted Gaussian distribution works as a filter function that passes the region which has a large contribution to the negativity.
Fig.~\fig{homodyne_mechanism} shows the mechanism of the protocol with the input state being an imperfect cubic phase state. 
Although we have only checked the genuine non-Gaussianity increase for imperfect cubic phase states having the form of Eq.~\eq{wfunc_cubic}, we expect that our protocol will work for large class of resource states in the same mechanism. One should be able to tune the thresholds for postselection so the shifted Gaussian distribution appropriately focuses on the region on the Wigner plane where many ripples occur.

Furthermore, specifically for cubic phase state, it can be expected that the output state gets closer to another imperfect cubic phase state with larger squeezing parameter as long as the input squeezing level is not too large. 
One can check that the intervals of ripples of the Wigner functions for imperfect cubic phase states do not vary much with squeezing parameters, but states with larger squeezing parameter have less difference in the amplitudes between neighboring peaks in the small $p$ domain.
As can be seen in Fig.~\fig{homodyne_mechanism}, the broad Gaussian distribution evens out the difference in the amplitudes in the small $p$ region if the amplitude difference and filtering of the Gaussian distribution balances out. In such cases, the filter function makes the shape of the Wigner function closer to the one for a larger squeezing parameter. However, when the initial squeezing is so large that ripples are still significantly large in a large $p$ domain, the amplification in the focused domain starts beating the ripple in the small $p$ domain, so it will not get closer to the state with larger squeezing parameter although it will still increase the negativity. 
We shall observe this tendency of the fidelity increase later in this section.

Let us first look at the negativity increase.
We are interested to compare the initial logarithmic negativity to the final logarithmic negativity. Let $\calN_L^{ini}$ be the initial logarithmic negativity. Let $\calN_L(\tilde{p}_v)$ denote the logarithmic negativity of state described by Wigner function in Eq.~\eq{Wout}. The average negativity conditioned on the postselection is given by
\ba
 \calN_L^{post} = \left< \calN_L^{fin} \right> /P_{suc},
 \label{eq:Wout_post}
\ea 
 where
\ba 
 \left< \calN_L^{fin} \right> = \int_{\tilde{p}_v^{-}}^{\tilde{p}_v^{+}} d\tilde{p}_v P(\tilde{p}_v) \calN_L(\tilde{p}_v).
\ea
\begin{figure*}
    \centering
    \subfigure[]{
     \includegraphics[width=0.25\textwidth]{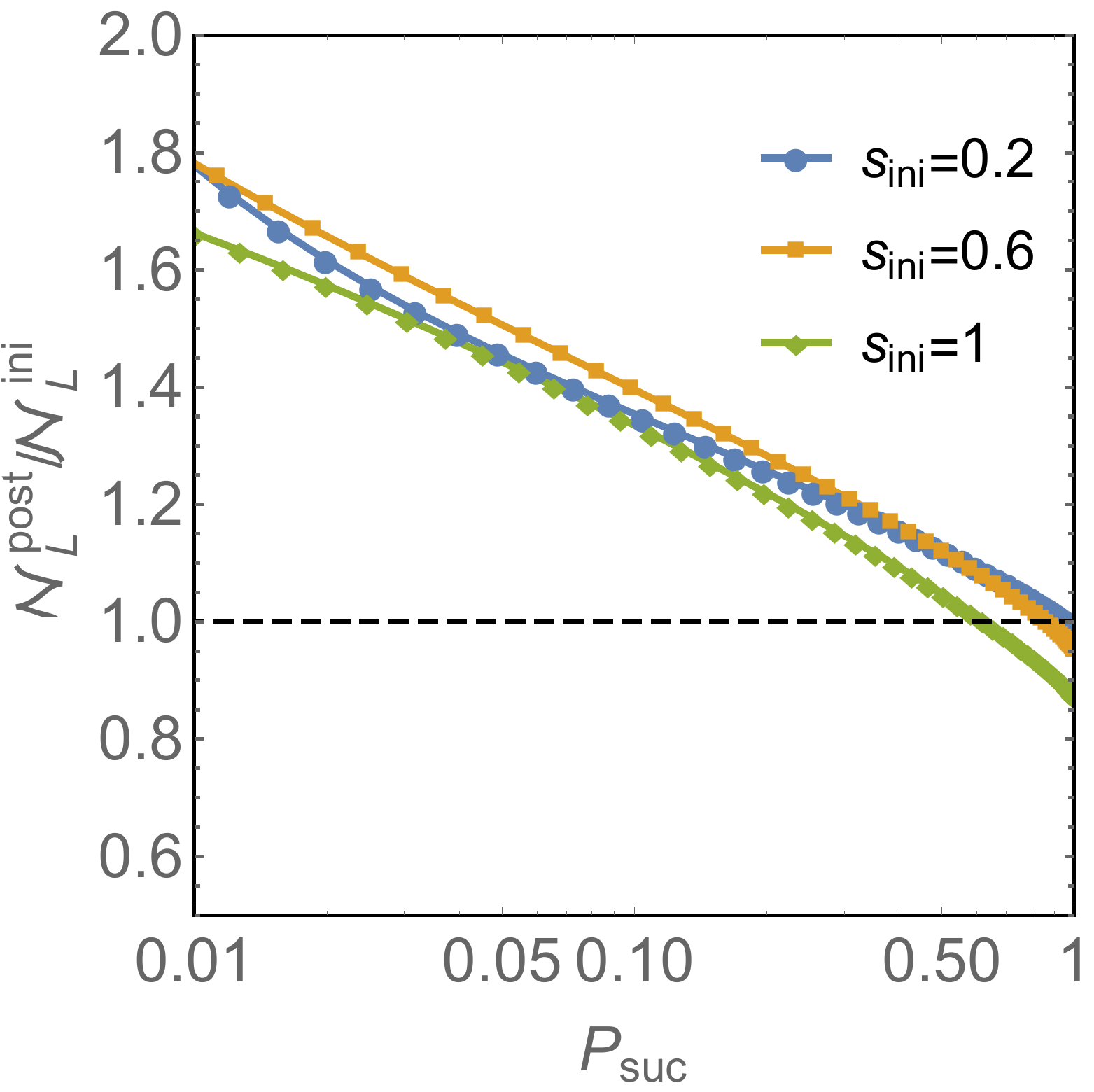}
     \label{fig:NL_post_psuc}
     }
     \centering
    \subfigure[]{
     \includegraphics[width=0.26\textwidth]{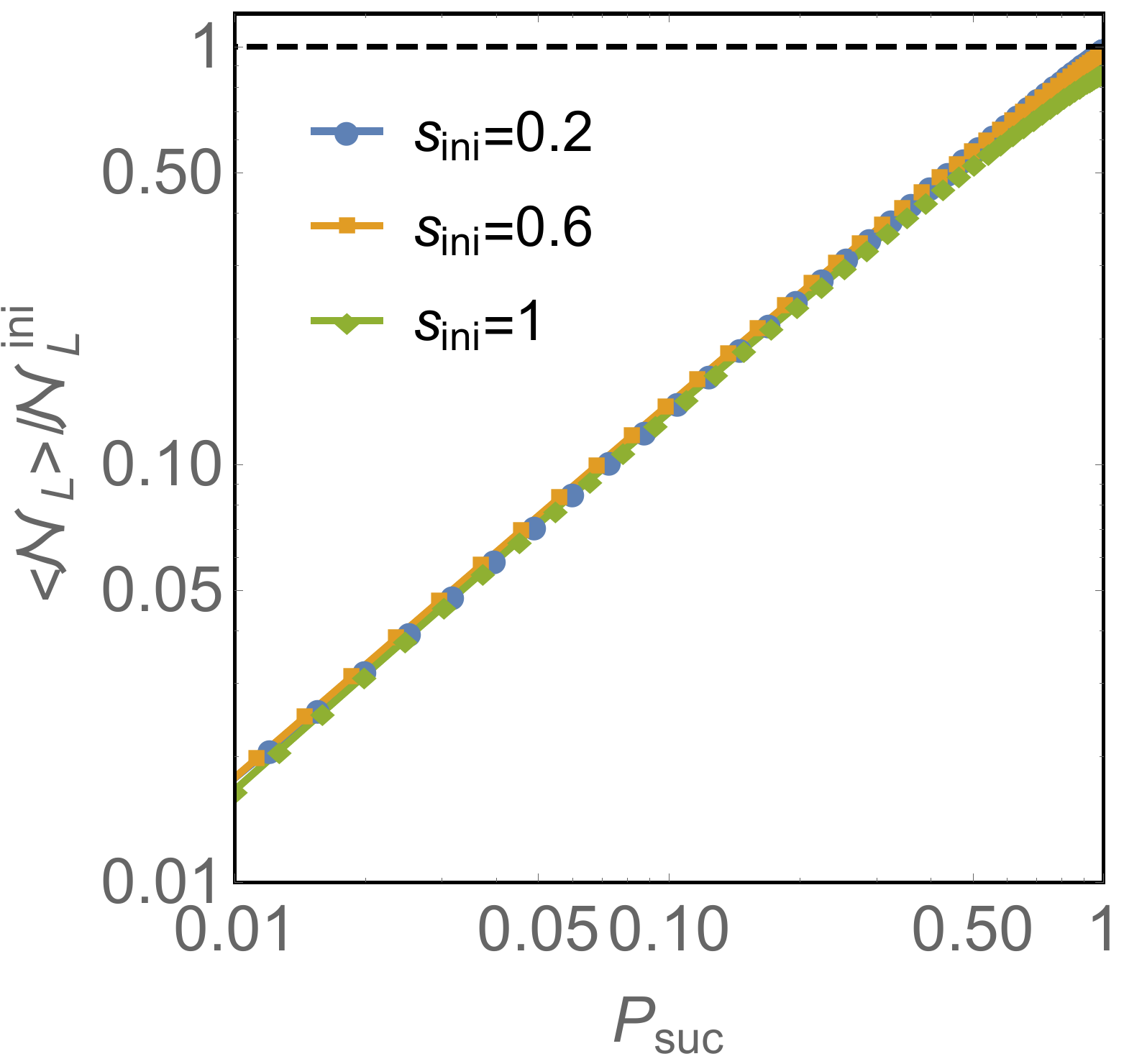}
     \label{fig:NL_ave_psuc}
     } 
    \centering
    \subfigure[]{
    \includegraphics[width=0.25\textwidth]{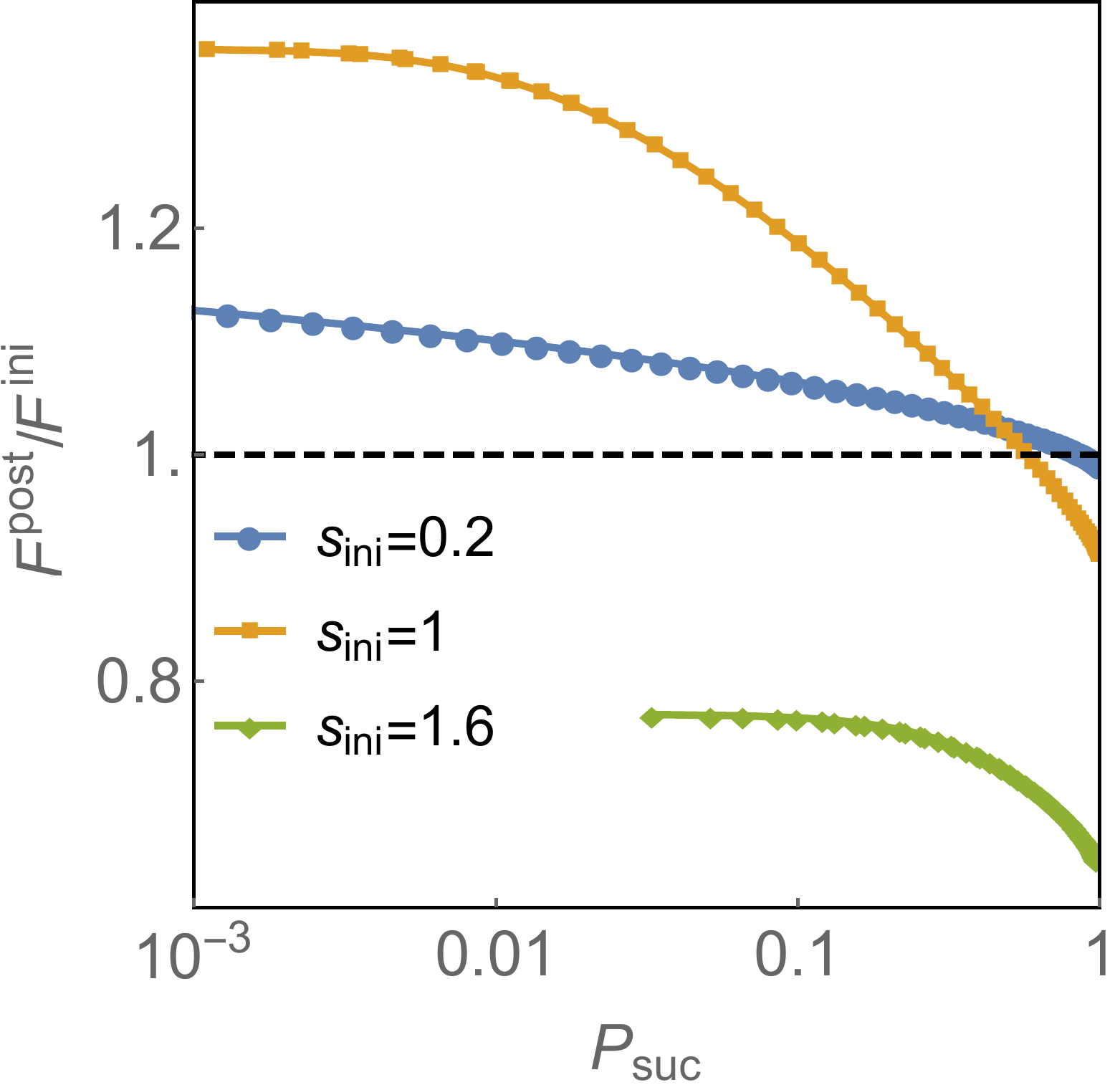}
    \label{fig:Fpost_psuc}
    }
     \caption{(a) Logarithmic negativity after postselection $\calN_L^{post}$ in terms of success probability $P_{suc}$ for different input squeezing parameters. Input states are $\ket{\gamma,P,s_{ini}}$ with $\gamma=0.05$, $P=0$. The initial logarithmic negativity for each $s_{ini}$ are $\calN_L^{ini} = 0.11, 0.38, 0.81$ for $s_{ini} = 0.2, 0.6, 1.0$ respectively. 
     (b) Average logarithmic negativity after the protocol $\left< \calN_L^{fin} \right>$ normalized by the initial logarithmic negativity.
     (c) Fidelity from the target state $\ket{\gamma, P',s_{targ}}$ after postselection $F^{post}$ in terms of success probability $P_{suc}$ for different input squeezing parameters. The target squeezing parameter is set as $s_{targ}=4.0$. Input states are $\ket{\gamma,P,s_{ini}}$ with $\gamma=0.05$, $P=0$. The initial fidelity for each $s_{ini}$ are $F^{ini} = 0.04, 0.10, 0.18$ for $s_{ini} = 0.2, 1.0, 1.6$ respectively.
     } 
\end{figure*}
The resource theory developed above allows one to put an upper bound on the average logarithmic negativity without a postselection. Namely,
\ba
 \left< \calN_L^{fin} \right> \leq \calN_L^{ini}
 \label{eq:bound}
\ea
and it holds for any choice of $\tilde{p}_v^{-}$ and $\tilde{p}_v^{+}$. One can see it as a trade-off relation between the output negativity and success probability $\calN_L^{post} \leq \calN_L^{ini}/P_{suc}$ by applying Eq.~\eq{bound} to Eq.~\eq{Wout_post}. Note that the left hand side of Eq.~\eq{bound} is monotonically increasing as the success region gets larger. It increases the success probability but may reduce the output negativity after postselection.

We choose an imperfect cubic phase state $\ket{\gamma, P, s_{ini}}$ as the input state $\hat{\rho}_{in}$, then $\calN_L^{ini}=\calN_L (\ket{\gamma, P, s})+\calN_L (\ket{0})=\calN_L (\ket{\gamma, P, s})$. Since $\calN_L$ is independent of $P$, we choose $P=0$ without loss of generality. As seen in the previous section, $\ket{\gamma, P, s}$ approaches the ideal cubic phase state as $s$ increases, so it is expected that the genuine non-Gaussianity also increases as $s$ increases. We find that it is indeed the case as shown in Fig.~\fig{NL_s}. We choose $\tilde{p}_v^{\pm}$ for given $P_{suc}$ such that it gives the maximum $\calN_L^{post}$ while satisfying Eq.~\eq{psuc}. $\calN_L(\tilde{p}_v)$ and $P(\tilde{p}_v)$ are plotted in terms of $\tilde{p}_v$ in Fig.~\fig{NL_prob_pvmeas}. The Gaussian shape of the probability distribution is a reminiscence of Gaussian Wigner function of the vacuum state. Since $\calN_L(\tilde{p}_v)$ is monotonically decreasing with $\tilde{p}_v$, we set $\tilde{p}_v^{-}=-\infty$ and take $\tilde{p}_v^{+}$ that satisfies Eq.~\eq{psuc}. This monotonic behavior of $\calN_L(\tilde{p}_v)$ in terms of $\tilde{p}_v$ was also seen in the other choice of $s_{ini}$ for $0\leq s_{ini}\leq 1.4$. 
Fig.~\fig{NL_post_psuc} shows the change in the logarithmic negativity by our protocol in terms of the success probability.
There is a clear trade-off between the negativity increase and the success probability, but positive increase in the negativity is realized at the most of the success probabilities.
Fig.~\fig{NL_ave_psuc} shows the ratios of the average logarithmic negativity after the protocol to the initial logarithmic negativity in terms of the success probability. It confirms that the selective monotonicity of the logarithmic negativity Eq.~\eq{bound} is satisfied. Note that the description of the output state is also known because the Wigner function of the output state contains all the information about the state. Thus, larger genuine non-Gaussianity of the output state will be helpful to realize tasks such as universal quantum computation under noisy environment.

%@@Although we can in principle deduce the description of the state by the output Wigner function, it is not clear if the output state with large non-Gaussianity may be efficiently used in following protocols such as teleportation. 
%@@It may be because we just simply do not know a protocol 
%@@look at Sander's paper for a protocol for the number states. 
As expected from Fig.~\fig{homodyne_mechanism} and the argument above, it turns out that the same protocol also allows for the ``cubic phase state distillation'' in the sense that it increases the fidelity from another imperfect (but better) cubic phase state with higher squeezing parameter. Again, we take an imperfect cubic phase state $\ket{\gamma,P_{ini},s_{ini}}$ as an input state and look at the fidelity from another imperfect cubic phase state $\ket{\gamma,P_{ini},s_{targ}}$ with $s_{targ} > s_{ini}$,
\ba
 F^{ini} &\equiv& F(\ket{\gamma,P_{ini}, s_{ini}}, \ket{\gamma,P_{ini}, s_{targ}})
 \nonumber\\
 &=& |\braket{\gamma, P_{ini}, s_{ini}|\gamma, P_{ini}, s_{targ}}|^2 
 \nonumber\\
 &=& \left[\cosh (s_{ini} - s_{targ})\right]^{-1},
\ea
where we used Eq.~\eq{wfunc_cubic}. Suppose we obtain $\tilde{p}_v$ as the outcome of the homodyne measurement. Then, the fidelity between the output state and the target state is 
\begin{eqnarray}
 F(\tilde{p}_v) &\equiv& F(\hat{\rho}_{{\rm out}|\tilde{p}_v}, \ket{\gamma,P',s_{targ}}) 
 \nonumber\\
 &=& |\bra{\gamma, P', s_{targ}} \hat{\rho}_{{\rm out}|\tilde{p}_v} \ket{\gamma, P', s_{targ}}|
 \nonumber\\
 &=& 4\pi\int dqdp W(q,p;\hat{\rho}_{{\rm out}|\tilde{p}_v})\nonumber \\ 
 &&\hspace{1.6cm}\times W(q,p;\ket{\gamma,P',s_{targ}}),
\end{eqnarray}
where we set $P'=\sqrt{\frac{1-t}{t}}\,\tilde{p}_v$ to take into account the slight shift of the Wigner function of the resource state along $p$ direction due to the measurement. The form of $W(q,p;\ket{\gamma,P,s})$ is given in Appendix \ref{app_wigners}. We choose $\tilde{p}_v^{\pm}$ in the same fashion so given $P_{suc}$ it maximizes $F^{post}$ where 
\ba
 F^{post} \equiv \left<F^{fin}\right>/P_{suc}
\ea
 and
 \ba
 \left<F^{fin}\right> = \int_{\tilde{p}_v^{-}}^{\tilde{p}_v^{+}}d\tilde{p}_v P(\tilde{p}_v)F(\tilde{p}_v).
\ea
Fig.~\fig{Fpost_psuc} shows the change in the fidelity before and after the protocol for different initial squeezing parameters. We show the result for $s_{targ} = 4$, but similar behaviors were seen for the other choices of $s_{targ}$. As can be seen in the figure, we obtain 13\% increase in the fidelity with success probability 1\% for $s_{ini}=1$.
As expected, the increase in the fidelity is not monotonic with $s_{ini}$; $F^{post}/F^{ini}$ increases until about $s_{ini} = 1.2$, but starts decreasing as $s_{ini}$ increases from that point. For $s_{ini}=1.6$ the fidelity even decreases from the initial fidelity. 
One may wonder whether sequential applications of the protocol keeps increasing the fidelity. However, it turns out that after some numbers of applications, the increase saturates at some point that is not very far from the fidelity obtained after the first application. It is not very surprising that the saturation occurs because although it increases the negativity, mismatch of the shape of the Wigner function can accumulate over the number of applications of the protocol. We leave the construction of a sequential protocol that keeps increasing the fidelity for future work.

\section{Conclusions}
\label{sec_conclusions}
We developed a resource theory of genuine non-Gaussianity that is relevant to continuous variable quantum computation. 
Our theory is operation-driven where we first fix a set of operations that are easily accessible as free operations and a set of free states are defined as the maximal set of the states generated by them.
We formally introduced the Gaussian protocol as free operations and the convex hull of the Gaussian states was naturally chosen to be a corresponding set of free states. 
We showed that the logarithmic negativity of Wigner function, the logarithm of the integral of the absolute value of the Wigner function, is a valid monotone under the Gaussian protocols and examined its properties. 
We computed the logarithmic negativity of well known genuine non-Gaussian resource states and compared them with respect to the mean photon number. We found that number states, ON states, and cubic phase states show similar behaviors. 
As an application of our theory, we discussed state conversions under the Gaussian protocols, where monotonicity serves as a necessary condition for such conversions. 
We examined the recently proposed protocol implementing the cubic phase gate using ON state and found that their protocol is efficient in terms of genuine non-Gaussianity. 
For another application, we proposed a simple protocol that may non-deterministically increase the genuine non-Gaussianity using the Gaussian protocol with postselection. We numerically verified that it indeed increases the logarithmic negativity of imperfect cubic phase states.
We further showed that it may also work as a ``cubic phase state distillator'', which takes an imperfect cubic phase state as input and outputs a state with higher fidelity to the better cubic phase state with larger squeezing parameter.
It needs to be noted that it is different from a conventional state distillation protocol that keeps purifying the state to a specific state under sequential applications of the protocols. 
Under our protocol, fidelity saturates after some applications due to the accumulation of the mismatch of the Wigner functions.

From this work, there is much room to explore in the future.
When counting resources, the notion of ``golden unit'' of the resource is often helpful. 
In the quantum computation in discrete systems, one could consider the $T$ state as the golden unit and evaluate a state by, for instance, the number of $T$ states required to create it.
It is subtle to define the corresponding notion in continuous-variable systems (with infinite dimension) because the genuine non-Gaussianity diverges at the infinite photon number limit. 
One may think that it can be still defined for fixed mean photon numbers, but it is problematic because Gaussian protocols can freely change the mean photon number. A formal treatment awaits.
Another concern would be that a protocol that keeps purifying a state to a specific resource state under sequential applications has not been known. 
Although our protocol increases the fidelity of imperfect cubic phase states, it ceases to increase the fidelity after some numbers of applications. 
The notion of the golden unit will make more sense if we have such a protocol.

The design of distillation protocols is related to the search of sufficient resources for continuous-variable universal quantum computation. 
The perfect cubic phase state, which implements the perfect cubic phase gate, is not a valid quantum state.
What is physically relevant is whether Gaussian protocols, with an infinite supply of an imperfect cubic phase state, can perform universal quantum computation.
In the case of the qubit computation, stabilizer protocols with an infinite supply of a noisy magic state are sufficient for the universal quantum computation. This is because well-established magic-state distillation protocols allow one to prepare a state that is arbitrarily close to the target magic state.
We could exploit the same argument for the continuous-variable universal quantum computation if we had a corresponding protocol. 
The protocol proposed in this paper may serve as a first step toward answering this question.

Another possible future work would be to consider genuine-non-Gaussianity generating power of processes.
The resource generating power of a quantum process indicates the maximum amount of the resource that it can create per use.  
It will play an important role in evaluating the cost of classical simulation of the circuit that involves operations that are not free~\cite{pashayan2015estimating}. 

Although we only discussed one monotone, the logarithmic negativity, it will be interesting to examine other monotones as well.
Especially, the logarithmic negativity is not faithful, and there are many potential choices for faithful measures.
For instance, distance based measures which quantify how far the state is from the closest free state are natural choices of faithful monotones. 
Since the set of free states is convex, well-known robustness measure~\cite{Vidal1999, howard2017application} and more general measure~\cite{Regula2018} should also work.
On introducing such faithful measures, the following two points should be cared. One is that such faithful measures will be hard to compute~\cite{veitch2014resource}.---even telling whether a state belongs to the set of free states is difficult~\cite{deMelo2013}.
The other point is that one needs to clarify the tasks on which bound genuine non-Gaussian states---genuine non-Gaussian states with non-negative Wigner function---have advantage over the states that can be written by convex mixtures of Gaussian states. 
The ways to detect bound genuine non-Gaussian states have been investigated, motivated by a desire to tell whether some genuinely non-Gaussian protocols have been applied before certain noisy channels may have caused the Wigner function to be positive again~\cite{filip2011detecting,genoni2013detecting,park2015testing,hughes2014quantum,jevzek2011experimental,Happ2018}. However, the operational meaning of bound genuine non-Gaussian states is not known yet. 
As an analog, in entanglement resource theory, bound entanglement can still improve the task of channel discrimination~\cite{Piani2009}. It will be important to identify such tasks in the Gaussian context.  

Finally, if we take the perspective that the negativity is a useful resource, one could develop a state-driven theory starting from the set of free states being the states with non-negative Wigner function. 
It would be nice to clarify the characterization of the ``non-negative Wigner function preserving operations'' and examine how feasible it is to realize such operations.

{\it Note added.}---Recently we became aware of a related work by F. Albarelli, M. G. Genoni, M. G. A. Paris and A. Ferraro~\cite{albarelli2018resource}.

\begin{acknowledgments}
The authors thank Daniel Gottesman, Jeffrey H. Shapiro, Yasunari Suzuki, and Seth Lloyd for fruitful discussions. R.T. acknowledges the support of the Takenaka scholarship foundation.
Q.Z. is supported by the Air Force Office of Scientific Research Grant No. FA9550-14-1-0052. Q.Z. also acknowledges the Claude E. Shannon Research Assistantship.
\end{acknowledgments}

\begin{appendix}
\begin{widetext}
%\section{Proof of continuity of Wigner function}
%Consider quantum state $\tr\left(\hat{\rho}\right)=1$. It suffices to prove for a single mode.From the Baker-Campbell-Hausdorff formula 
%\be
%\hat{D}\left({  \xi}+\epsilon\right)=\hat{D}\left({  \xi}\right)\hat{D}\left({  \epsilon}\right) %e^{i\left(\xi_1\epsilon_2-\xi_2\epsilon_1\right)}.
%\ee 
%From Cauchy-Schwarz inequality 
%\ba
%&&|\chi\left({  \xi};\hat{\rho}\right)-\chi\left({  \xi}+\epsilon;\hat{\rho}\right)|
%\\
%&= &|\tr \left[\hat{\rho} \hat{D}\left({  \xi}\right)
%\left(\hat{D}\left({  \epsilon}\right) e^{i\left(\xi_1\epsilon_2-\xi_2\epsilon_1\right)}-I\right)\right]|
%\\
%&\le &\sqrt{\tr\left[\hat{\rho}^2\right]} \sqrt{\tr\left[\left(\hat{D}\left({  \epsilon}\right) e^{i\left(\xi_1\epsilon_2-\xi_2\epsilon_1\right)}-I\right)^\dagger \left(\hat{D}\left({  \epsilon}\right) e^{i\left(\xi_1\epsilon_2-\xi_2\epsilon_1\right)}-I\right)\right]}
%\\
%&\le&
%\sqrt{2}\sqrt{
%\tr\left[ I- {\rm Re}\left(\hat{D}\left({  \epsilon}\right) e^{i\left(\xi_1\epsilon_2-\xi_2\epsilon_1\right)}\right)
%}\right],
%\ea 
%where we have used $\tr\left(\hat{\rho}^2\right)\le \tr\left(\hat{\rho}\right)=1$ and $\tr\left[\hat{D}\left({  \epsilon}\right)\right]=\frac{1}{\pi}\int d^2 \alpha \braket{\alpha|\alpha+\frac{\epsilon_1+i\epsilon_2}{2}}=\frac{1}{\pi}\int d^2 \alpha e^{-(\epsilon_1^2+\epsilon_2^2)/8}=e^{-(\epsilon_1^2+\epsilon_2^2)/8}\tr\left(I\right)$.

\section{Proof of \lemm{Wigner_Gaussian_unitary}}
\label{proof_lemma1}
\ba
\chi \left({  \xi};\hat{U}_{  S,  d} \hat{\rho} \hat{U}_{  S,  d}^\dagger\right)&=& {\rm Tr} \left[\hat{U}_{  S,  d} \hat{\rho} \hat{U}_{  S,  d}^\dagger\exp\left(i \hat{  x}^T {  \Omega} {  \xi}  \right)\right]
\nonumber
\\
&=&{\rm Tr} \left[ \hat{\rho} \hat{U}_{  S,  d}^\dagger\exp\left(i \hat{  x}^T {  \Omega} {  \xi}  \right)\hat{U}_{  S,  d}\right]
\nonumber
\\
&=&{\rm Tr} \left[ \hat{\rho} \exp\left(i \left(  S  \hat{x}+  d\right)^T {  \Omega} {  \xi}  \right)\right]
\nonumber
\\
&=&\chi \left({  S}^{-1}{  \xi};\hat{\rho}\right) \exp\left(i {  d}^T {  \Omega   \xi }\right).
\ea 
In the last step we used $  S^T   \Omega=  \Omega   S^{-1}$. To be more rigorous, the cyclic property of trace is ensured by the Fubini-Tonelli theorem.

Similarly, for Wigner function, we have
\ba
&&W\left({  x};\hat{U}_{  S,  d}\hat{\rho} \hat{U}_{  S,  d}^\dagger\right)
\nonumber
\\
&=&\int \frac{d ^{2N}{  \xi}^\prime}{\left(2\pi\right)^{2N}}\exp\left(-i {  x}^T {  \Omega} {  S} {  \xi}^\prime\right) \chi \left({  \xi^\prime};\hat{\rho}\right) \exp\left(i {  d}^T {  \Omega   S  \xi^\prime }\right),
\nonumber
\\
&=&\int \frac{d ^{2N}{  \xi}^\prime}{\left(2\pi\right)^{2N}}\exp\left(-i \left({  S}^{-1}\left(  x-  d\right)\right)^T {  \Omega}  {  \xi}^\prime\right) \chi \left({  \xi^\prime};\hat{\rho}\right) 
\nonumber
\\
&=&W\left(  S^{-1}\left(  x-  d\right);\hat{\rho}\right).
\ea
where we let $  \xi^\prime=  S^{-1}  \xi$ and used $|\det\left(  S\right)|=1$, and $  \Omega   S=\left(  S^{-1}\right)^T   \Omega$.

\section{Proof of \ref{Wigner_xspace}, \ref{Wigner_trace_states} in \lemm{Wigner}}
\label{Convention_proof}

Note that in accordance with ref.~\cite{Weedbrook_2012}, we use the following convention: commutation relation $\left[\hat{q},\hat{p}\right]=2i$, delta-function $\int dp \exp\left(ip x\right)=2\pi \delta\left(x\right)$, and displacement of position $\hat{D}_q\left(x\right)\equiv \exp\left(-i\hat{p}x/2\right)$, which satisfies $\hat{D}_q\left(x\right)\ket{y}=\ket{y+x}$. It suffices to prove the single mode case. Denote $\xi=\left(\xi_1,\xi_2\right)$, thus $\hat{D}\left({  \xi}\right)=e^{-i\hat{p}\xi_1+i\hat{q}\xi_2}=e^{-i\hat{p}\xi_1} e^{i\hat{q}\xi_2} e^{i \xi_1 \xi_2}$.

From the definition, for single mode pure states $\hat{\rho}=\state{\psi}$, we have (Note all integration are from $-\infty$ to $\infty$)
\ba
W\left(q,p;\hat{\rho}\right)&=&\int \frac{d ^{2}\xi_1 \xi_2}{\left(2\pi\right)^{2}}
e^{ip\xi_1} e^{-iq\xi_2} \int dx \braket{x|\hat{\rho} e^{-i\hat{p}\xi_1} e^{i\hat{q}\xi_2} e^{i \xi_1 \xi_2}|x}
\\
&=&\int \frac{d ^{2}\xi_1 \xi_2}{\left(2\pi\right)^{2}}
e^{ip\xi_1} e^{-iq\xi_2} \int dx \braket{x|\psi}\braket{\psi|x+2\xi_1} e^{ix\xi_2} e^{i \xi_1 \xi_2}
\\
&=&\int \frac{d\xi_1 }{2\pi}
e^{ip\xi_1}  \int dx \braket{x|\psi}\braket{\psi|x+2\xi_1} \delta\left(x+\xi_1-q\right)
\\
&=&\int \frac{d\xi_1 }{2\pi}
e^{-ip\xi_1}   \braket{q+\xi_1|\psi}\braket{\psi|q-\xi_1}.
\ea
We arrive at \ref{Wigner_xspace}.

For two single-mode quantum states,
\ba
&&4\pi \int d x W\left(x;\hat{\rho}\right)W\left(x;\hat{\sigma}\right)
\\
&=&4\pi \int d q dp 
\int \frac{d ^{2}\xi_1 \xi_2}{\left(2\pi\right)^{2}}
e^{ip\xi_1} e^{-iq\xi_2} \int dx \braket{x|\hat{\rho}|x+2\xi_1} e^{ix\xi_2} e^{i \xi_1 \xi_2}
\\
&&\times 
\int \frac{d ^{2}\xi_1^\prime \xi_2^\prime}{\left(2\pi\right)^{2}}
e^{ip\xi_1^\prime} e^{-iq\xi_2^\prime} \int dx^\prime \braket{x^\prime|\hat{\sigma}|x^\prime+2\xi_1^\prime} e^{ix^\prime\xi_2^\prime} e^{i \xi_1^\prime \xi_2^\prime}
\\
&=&
4\pi 
\int \frac{d ^{2}\xi_1 \xi_2}{\left(2\pi\right)^{2}}\int \frac{d ^{2}\xi_1^\prime \xi_2^\prime}{\left(2\pi\right)^{2}}
2\pi\delta\left(\xi_1+\xi_1^\prime\right)2\pi\delta \left(\xi_2+\xi_2^\prime\right)
\\
&&\times \int dx \braket{x|\hat{\rho}|x+2\xi_1} e^{ix\xi_2} e^{i \xi_1 \xi_2}
 \int dx^\prime \braket{x^\prime|\hat{\sigma}|x^\prime+2\xi_1^\prime} e^{ix^\prime\xi_2^\prime} e^{i \xi_1^\prime \xi_2^\prime}
\\
&=&
4\pi 
\int \frac{d ^{2}\xi_1 \xi_2}{\left(2\pi\right)^{2}} 
\int dx \braket{x|\hat{\rho}|x+2\xi_1} e^{ix\xi_2} e^{i \xi_1 \xi_2}
\int dx^\prime \braket{x^\prime|\hat{\sigma}|x^\prime-2\xi_1} e^{-ix^\prime\xi_2} e^{i \xi_1 \xi_2}
\\
&=&
4\pi 
\int \frac{d \xi_1}{\left(2\pi\right)^{2}} 
\int dx \braket{x|\hat{\rho}|x+2\xi_1} 
\int dx^\prime \braket{x^\prime|\hat{\sigma}|x^\prime-2\xi_1} 
2\pi \delta\left(x-x^\prime +2\xi_1\right)
\\
&=&
\int d 2\xi_1
\int dx \braket{x|\hat{\rho}|x+2\xi_1}  \braket{x+2\xi_1|\hat{\sigma}|x} =\tr\left(\hat{\rho} \hat{\sigma} \right).
\ea
We arrive at \ref{Wigner_trace_states}.

\section{Proof of \lemm{mana}}
\label{proof_mana}
\begin{proof}

\ref{mana_unitary} directly comes from \lemm{Wigner_Gaussian_unitary} as follows.  
\ba
\calN_L\left(\hat{U}_{S,d}\hat{\rho}\hat{U}_{S,d}^\dagger\right)&=&\ln \int d^{2N} x |W\left(x; \hat{U}_{S,d}\hat{\rho}\hat{U}_{S,d}^\dagger\right)|
\\
&=&\ln \int d^{2N} x|W\left(S^{-1}x-S^{-1}d;\hat{\rho}\right)|
\\
&=&\calN_L\left(\hat{\rho}\right).
\ea 

\ref{mana_partial_trace} comes from \ref{Wigner_trace} and triangle inequality. Consider bipartite state $\hat{\rho}_{AB}$ with two parts $A$ and $B$.
\ba
\calN_L\left({\rm Tr}_A \hat{\rho}_{AB}\right)&=&\ln \int d^{2N_B} x_B |W\left(x_B; {\rm Tr}_A \hat{\rho}_{AB} \right)|
\\
&=& \ln \int d^{2N_B} x_B|\int d^{2N_A} x_A W\left(x_A,x_B; \hat{\rho}_{AB}\right)|
\\
&\le &\ln \int d^{2N_B} x_B\int d^{2N_A} x_A |W\left(x_A,x_B; \hat{\rho}_{AB}\right)|
\\
&=&\calN_L\left( \hat{\rho}_{AB}\right).
\ea 

\ref{mana_additive} follows directly from \ref{Wigner_product}.
\ba 
\calN_L\left(\hat{\rho}_{A}\otimes \hat{\rho}_{B}\right)&=&
\ln \int d^{2N} x |W\left(x; \hat{\rho}_{A}\otimes \hat{\rho}_{B}\right)|.
\\
&=&
\ln \int d^{2N_A} x_A \int d^{2N_B} x_B
|W\left(x_A; \hat{\rho}_{A} \right)|| W\left(x_B;  \hat{\rho}_{B} \right)|
\\
&=&\calN_L\left(\hat{\rho}_{A}\right)+\calN_L\left( \hat{\rho}_{B}\right).
\ea 

\ref{mana_zero} is simply from equality condition of triangular inequality.  If $\int d^{2N} x |W\left(x; \hat{\rho}\right)|=\int d^{2N} x W\left(x; \hat{\rho}\right)=1$, we have $\calN_L\left(\hat{\rho}\right)=0$. If $\calN_L\left(\hat{\rho}\right)=0$, we have $W\left(x; \hat{\rho}\right)\ge 0$ except for points with measure zero. These measure zero negative points have no relevance to experiments in reality, and we have included these cases in $W_+$.
% {\color{red}{What is the equality condition for triangular inequality for integration?}} It seems that $W\left(x;\hat{\rho}\right)$ can only be negative on points with measure zero. If Wigner function is continous then we are OK.

\ref{mana_channel} directly follows from \ref{mana_unitary}-\ref{mana_zero} and the fact that any Gaussian channel has Stinespring dilation of a Gaussian unitary $\hat{U}_\calG$ with ancilla $E$ in vacuum state $\hat{0}$. $\calN_L\left(\Phi_\calG\left(\hat{\rho}\right)\right)=\calN_L\left(\tr_E \left(\hat{\rho}\otimes \hat{0}\right)\right)\le \calN_L\left(\hat{\rho}\otimes \hat{0}\right)= \calN_L\left(\hat{\rho}\right)$.

%\ref{mana_convex} directly follows from triangular inequality.

To prove \ref{mana_prob}, it suffices to prove non-increasing under the Gaussian protocol in~\lemm{Gprotocol2} on state $\hat{\rho}_{AB}$: perform a homodyne on $A$ and conditioned on the measurement result $q_A$, perform a Gaussian channel $\Phi_{q_A}$ on $B$. 
By \ref{Wigner_measurement}, the measurement result's distribution is given by
\be
P_{q_A}=\int d^{N_A} p_A d^{2N_B} x_B W\left(p_A,q_A,x_B; \hat{\rho}_{AB}\right),
\ee
and the Wigner function of $B$ conditioned on measurement result $q_A$ is given by 
\be 
W\left(x_B; \hat{\rho}_{B|q_A}\right)=\frac{1}{P_{q_A}}\int d^{N_A} p_A W\left(p_A,q_A,x_B; \hat{\rho}_{AB}\right).
\ee

The logarithmic negativity of the overall output of the channels $\{\Phi_{q_A}\}$ conditioned on measurement result $q_A$ is 
\ba 
&&  \calN_L\left(\int d^{N_A} q_A P_{q_A}\Phi_{q_A}\left(\hat{\rho}_{B|q_A}\right)\right)
%\\
%&= &  \ln \int d^{2N_B} x_B |W\left(x_B; \int d^{N_A} q_A P_{q_A}\Phi_{q_A}\left(\hat{\rho}_{B|q_A}\right)\right)|
\\
&= &  \ln \int d^{2N_B} x_B |\int d^{N_A} q_A P_{q_A} W\left(x_B; \Phi_{q_A}\left(\hat{\rho}_{B|q_A}\right)\right)|
\\
&\le & \ln \int d^{N_A} q_A P_{q_A} \int d^{2N_B} x_B | W\left(x_B; \Phi_{q_A}\left(\hat{\rho}_{B|q_A}\right)\right)|
\\
&\le & \ln \int d^{N_A} q_A P_{q_A} \int d^{2N_B} x_B | W\left(x_B; \hat{\rho}_{B|q_A}\right)|
\\
&=& \ln \int d q^{N_A}_A P_{q_A} \int d^{2N_B} x_B | \frac{1}{P_{q_A}}\int d^{N_A} p_A W\left(p_A,q_A,x_B; \hat{\rho}_{AB}\right)|
\\
&\le &   \ln \int d^{N_A} q_A\int d^{2N_B} x_B \int d^{N_A} p_A|  W\left(p_A,q_A,x_B; \hat{\rho}_{AB}\right)|
\\
&=& \calN_L \left(\hat{\rho}_{AB}\right).
\ea 
The first equality is due to the linearity of Wigner functions. The first inequality is due to triangular inequality. The second inequality is due to \ref{mana_channel} and the monotonicity of $\ln\left(x\right)$. The third inequality is due to triangular inequality.

The non-increasing of the average logarithmic negativity can be proved similarly as follows
\ba 
\overline{\calN_L}&\equiv &\int d^{N_A} q_A P_{q_A}  \calN_L\left(\Phi_{q_A}\left(\hat{\rho}_{B|q_A}\right)\right)
\\
&\le &  \int d^{N_A} q_A P_{q_A}  \calN_L\left(\hat{\rho}_{B|q_A}\right)
\\
&=&\int d^{N_A} q_A P_{q_A}  \ln \int d^{2N_B} x_B |\frac{1}{P_{q_A}}\int d^{N_A} p_A  W\left(p_A,q_A,x_B; \hat{\rho}_{AB}\right)|
\\
&\le &   \ln \int d^{N_A} q_A\int d^{2N_B} x_B |\int d^{N_A} p_A  W\left(p_A,q_A,x_B; \hat{\rho}_{AB}\right)|
\\
&\le &   \ln \int d^{N_A} q_A\int d^{2N_B} x_B \int d^{N_A} p_A|  W\left(p_A,q_A,x_B; \hat{\rho}_{AB}\right)|
\\
&=& \calN_L \left(\hat{\rho}_{AB}\right).
\ea 
The first inequality is due to \ref{mana_channel}. The rest of inequalities is due to triangular inequality, $\ln\left(x\right)$ being concave and Jensen's inequality.

\end{proof}

\section{Wigner functions for resource states}
\label{app_wigners}
\subsection{Number state}
The Wigner function of the number state $\ket{n}$ is
\be 
W\left(p,q; \ket{n}\right)=\frac{1}{2\pi} \left(-1\right)^n L_n \left(p^2+q^2\right)e^{-\left(p^2+q^2\right)/2},
\ee 
where $L_n\left(x\right)$ is the Laguerre polynomial.

\subsection{Single-photon-added and single-photon-subtracted states}
The Wigner function of such states are analytically calculable~\cite{walschaers2017entanglement,walschaers2017statistical}. 
The Wigner function of the single-photon-added (+) or subtracted (-) state starting from $\ket{\theta,s}=R\left(\theta\right)S\left(s\right)\ket{0}$ is given by
\be
W^{\pm}\left(x\right)=\frac{1}{2}\left[x V^{-1}A_g^{\pm} V^{-1} x^T-\tr \left(V^{-1} A_g^{\pm}\right)+2\right] W_0\left(x\right),
\ee
where $V$ is the covariance matrix of $\ket{\theta,s}$ and
\be
A_g^{\pm}=2\frac{\left(V\pm I\right)^2}{\tr\left(V\pm I\right)}.
\ee 

\subsection{Cubic phase state} 
The unnormalized wave function of $\ket{\gamma,P}$ is
\ba
 \psi(q)\propto \exp\left(i\gamma q^3 + i\frac{Pq}{2}\right).
\ea
Using $W(q,p) = \frac{1}{2\pi}\int dy \psi^*(q-y) \psi(q+y)e^{-ipy}$, the unnormalized Wigner function of $\ket{\gamma, P}$ is given by
\ba
W\left(q,p; \ket{\gamma,P}\right)&\propto& \int_{-\infty}^{\infty} dy \exp\left[i\left(2\gamma y^3 + 2\left(3\gamma q^2 - \frac{p-P}{2}\right)y\right)\right]\\
&=& \int_{0}^{\infty} dy\  2\cos\left[2\gamma y^3+2\left(3\gamma q^2-\frac{p-P}{2}\right)y\right] \\
&\propto&
{\rm Ai}\left(\left(\frac{4}{3\gamma}\right)^{1/3}\left(3\gamma q^2-\frac{p-P}{2}\right)\right),
\ea 
where ${\rm Ai}\left(x\right)$ is the Airy function. 

The Wigner function of $\ket{\gamma,P,s}$ is obtained by
% \ba
%  W(q,p|\ket{\gamma,P,s}) &=& (\pi^3 e^{2s})^{-1/2}\exp\left[\frac{-q^2}{e^{2s}}\right]\int_{-\infty}^{\infty} dy \exp\left[i(2\gamma y^3 + 2(3\gamma q^2 - (p-P))y)\right]\exp\left[-\frac{y^2}{e^{2s}}\right]\\
%  &=& (\pi^3 e^{2s})^{-1/2}\exp\left[\frac{-q^2}{e^{2s}}\right]\int_{0}^{\infty} dy\  2\cos\left(2\gamma y^3+2(3\gamma q^2-p+P)y\right)\exp\left[-\frac{y^2}{e^{2s}}\right].
%  \label{eq:wig_cubic_finite}
% \ea
\ba
 W(q,p;\ket{\gamma,P,s}) &=& (8\pi^3 e^{2s})^{-1/2}\exp\left[\frac{-q^2}{2e^{2s}}\right]\int_{-\infty}^{\infty} dy \exp\left[i\left(2\gamma y^3 + 2\left(3\gamma q^2 - \frac{p-P}{2}\right)y\right)\right]\exp\left[-\frac{y^2}{2e^{2s}}\right] \nonumber\\
 &=& (8\pi^3 e^{2s})^{-1/2}\exp\left[\frac{-q^2}{2e^{2s}}\right]\int_{0}^{\infty} dy\  2\cos\left[2\gamma y^3+2\left(3\gamma q^2-\frac{p-P}{2}\right)y\right]\exp\left[-\frac{y^2}{2e^{2s}}\right]. \nonumber
 \label{eq:wig_cubic_finite}
\ea
As an example, the Wigner function with $\gamma = 0.05$, $P=0$, $s=1$ for $q>0$ is shown in Fig.~\fig{wigner_pure} (it is symmetric for $q\rightarrow -q$).

\begin{figure}
    \centering
    \includegraphics[width=0.3\textwidth]{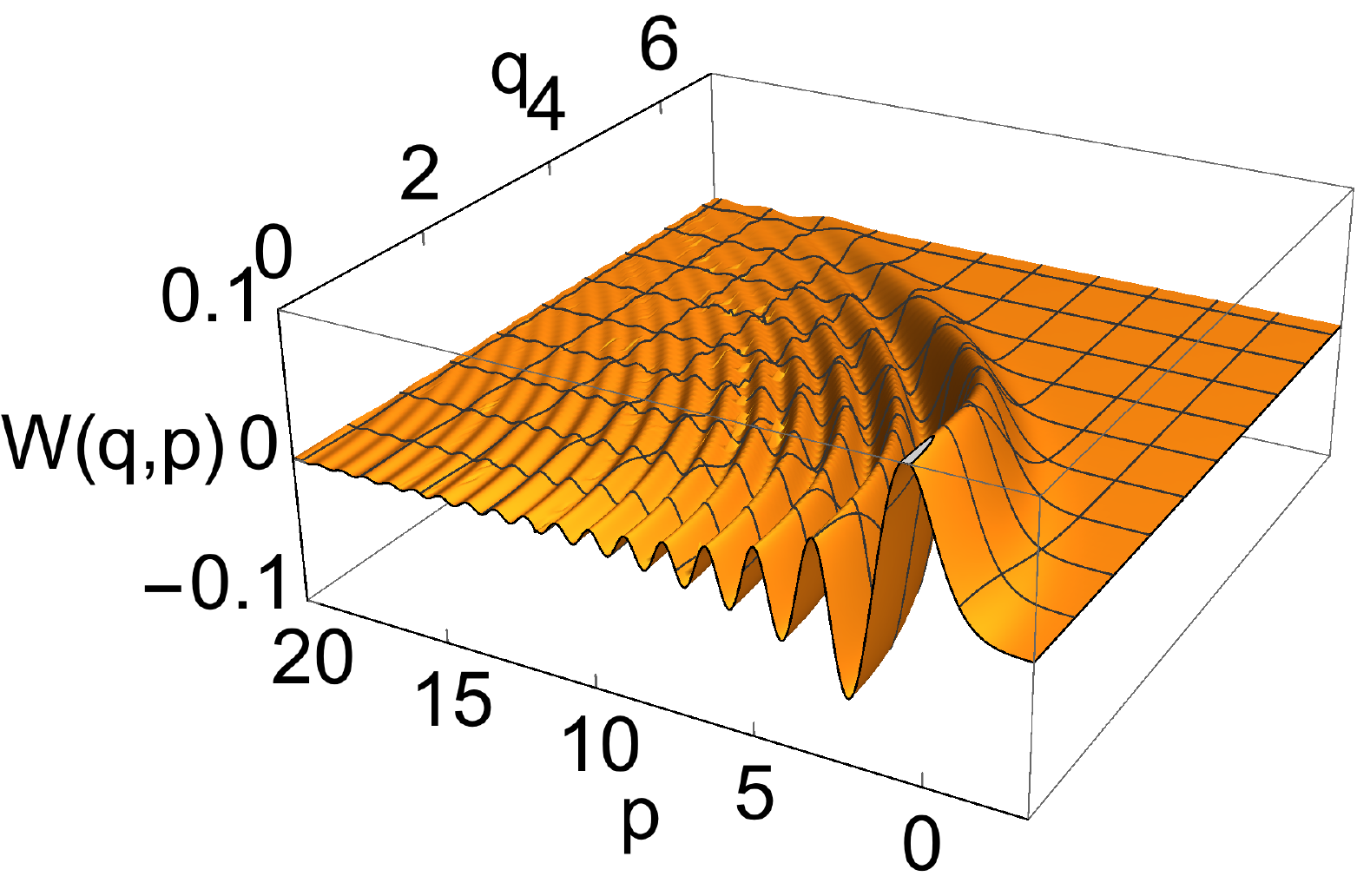}
    \caption{Wigner function of $\ket{\gamma, P, s}$ with $\gamma = 0.05$, $P=0$, $s=1$} 
    \label{fig:wigner_pure}
\end{figure}

To calculate the mean photon number of state $\ket{\gamma,P,s}$, we notice that $\hat{a}^\dagger \hat{a}=\left(\hat{p}^2+\hat{q}^2\right)/4-1/2$ and by definition Eq.~\eq{cubic_phase_definition} $\ket{\gamma,P,s}  = \hat{V}(\gamma)\hat{D}_p\left(P\right)\hat{S}\left(-s\right)\ket{0}$, thus the mean photon number is given by
\ba
N_S&=&\braket{0|\hat{S}^\dagger\left(-s\right) \hat{D}_p^\dagger\left(P\right) \hat{V}^\dagger(\gamma)\left(\hat{p}^2+\hat{q}^2\right) \hat{V}(\gamma)\hat{D}_p\left(P\right)\hat{S}\left(-s\right)|0}/4-1/2
\\
&=&\braket{0|\hat{S}^\dagger\left(-s\right)  \hat{V}^\dagger(\gamma)\left(\left(\hat{p}+P\right)^2+\hat{q}^2\right) \hat{V}(\gamma)\hat{S}\left(-s\right)|0}/4-1/2,
\ea
where we have used $\left[\hat{V}(\gamma),\hat{D}_p\left(P\right)\right]=0$ and $\hat{D}_p^\dagger\left(P\right) \hat{p} \hat{D}_p\left(P\right)=\hat{p}+P$. Since $\left[\hat{q},\hat{p}\right]=2i$, by the correspondence $\hat{p}=-2i\frac{d}{d\hat{q}}$ we have $\left[\hat{p}, \hat{V}(\gamma)\right]=\hat{V}(\gamma) 6 \gamma \hat{q}^2$, thus
\be
\left[\left(\hat{p}+P\right)^2, \hat{V}(\gamma)\right]=\left[\hat{p}^2, \hat{V}(\gamma)\right]+2P \left[\hat{p}, \hat{V}(\gamma)\right]=4\hat{V}(\gamma) \left(9\gamma^2 \hat{q}^4+3 P \gamma \hat{q}^2\right)+6 \hat{V}(\gamma) \left(\hat{p}\hat{q}^2+\hat{q}^2\hat{p}\right).
\ee
With these in hand, one finds that
\ba
N_S&=&\braket{0|\hat{S}^\dagger\left(-s\right) \left(\left(\hat{p}+P\right)^2+\hat{q}^2+4\left(9\gamma^2 \hat{q}^4+3 P \gamma \hat{q}^2 \right)+6 \left(\hat{p}\hat{q}^2+\hat{q}^2\hat{p}\right)\right) \hat{S}\left(-s\right)|0}/4-1/2.
\label{eq:Nstemp}
\ea
Now we are evaluating the expectation value of some polynomial of quadrature operators on a zero-mean squeezed vacuum state with $\braket{\hat{p}^2}=e^{-2s}, \braket{\hat{q}^2}=e^{2s}$. By Gaussian moment factoring (Wick's theorem) we have $\braket{\hat{q}^4}=3\braket{\hat{q}^2}^2=3 e^{4s}, \braket{\hat{q}^2 \hat{p}}=\braket{\hat{p}\hat{q}^2 }=0$. Combing the above into Eq.~\eq{Nstemp}, we have 
\be
N_S=\frac{1}{2}\left(\cosh(2s)-1\right)+18 \gamma^2 e^{4s}+\frac{1}{4}\left(P+6\gamma e^{2s}\right)^2,
\ee
which is Eq.~\eq{Ns_cubic_phase}.

\subsection{ON-state}
The Wigner function can be obtained analytically 
\ba
&W\left(p,q; \ket{ON}\right)=\frac{1}{1+|a|^2}W\left(p,q; \ket{0}\right)+\frac{|a|^2}{1+|a|^2}W\left(p,q; \ket{n}\right)
\nonumber
\\
&+\frac{1}{1+|a|^2}\sqrt{\frac{1}{n!}} \frac{1}{2\pi} \exp\left(-p^2-x^2\right) \left(a\left(x-i p\right)^n+a^\star \left(x+i p\right)^n \right).
\ea

%\begin{figure}[htbp]
%    \centering
%    \includegraphics[scale=0.5]{ONstate.pdf}
%    \caption{Non-Gaussianity of ON state.} 
%    \label{fig:NLON}
%\end{figure}

\end{widetext}
\end{appendix}

\bibliography{myref.bib}

\end{document}